\newtheorem{prop}{Proposition}
\newtheorem{thm}{Theorem}
\newtheorem*{thm_faithful_repeated}{Theorem~\ref{thm_faithful}}
\newtheorem*{thm_isolated_repeated}{Theorem~\ref{thm_isolated}}
\newtheorem*{thm_diamond_repeated}{Theorem~\ref{thm_diamond}}
\newcommand{\be}{\begin{equation}}
\newcommand{\ee}{\end{equation}}
\newcommand{\ben}{\begin{eqnarray}}
\newcommand{\een}{\end{eqnarray}}
\newcommand{\bF}{\begin{figure}}
\newcommand{\eF}{\end{figure}}
\newcommand{\bi}{\begin{itemize}}
\newcommand{\ei}{\end{itemize}}
\def\ket#1{| #1 \rangle}
\def\bra#1{\langle #1 |}
\newcommand{\RomanNumeralCaps}[1]
    {\MakeUppercase{\romannumeral #1}}
\begin{document}
\title{Subtleties of witnessing quantum coherence in non-isolated systems}

\author{George~C.~Knee}
\email{gk@physics.org}
\affiliation{Department of Physics, University of Warwick, Coventry CV4 7AL, United Kingdom}
\author{Max Marcus}
\affiliation{Department of Physics, University of Warwick, Coventry CV4 7AL, United Kingdom}
\author{Luke~D.~Smith}
\altaffiliation[Current address: ]{School of Chemistry, University of Leeds, Leeds LS2 9JT, United Kingdom}
\author{Animesh~Datta}
\email{animesh.datta@warwick.ac.uk}
\affiliation{Department of Physics, University of Warwick, Coventry CV4 7AL, United Kingdom}
\date{\today}                                           % Activate to display a given date or no date

\begin{abstract}
Identifying non-classicality unambiguously and inexpensively is a long-standing open challenge in physics. The No-Signalling-In-Time protocol was developed as an experimental test for macroscopic realism, and serves as a witness of quantum coherence in isolated quantum systems by comparing the quantum state to its completely dephased counterpart. We show that it provides a lower bound on a certain resource-theoretic coherence monotone. We go on to generalise the protocol to the case where the system of interest is coupled to an environment. Depending on the manner of the generalisation, the resulting witness either reports on system coherence alone, or on a disjunction of system coherence with either (i) the existence of non-classical system-environment correlations or (ii) non-negligible dynamics in the environment. These are distinct failure modes of the Born approximation in non-isolated systems.
\end{abstract}

\maketitle
\section{Introduction}
Quantum mechanics continues to revolutionise our understanding of light and matter on ever larger scales and in ever more complex systems. Its counter-intuitive predictions have long been the subject of skepticism, which has in turn spurred on the development of fundamental tests such as Bell's inequality~\cite{Bell1964,ClauserHorneShimony1969}. This test involves making measurements on each of a pair of spatially separated  quantum systems. If the measurements are rapid enough and the separation is large enough, the possible correlations between measurement results are bounded according to any `local hidden-variable' theory. The predicted violation of this bound by quantum mechanics, and the experimental demonstration thereof, promises far more than just a refutation of the classical point of view: the emergent field of quantum information science and technology is broadly predicated on exploiting these `nonclassical' correlations. The experimental methodology is invariably to isolate systems to such an extent that their quantum character is readily apparent.

Some of these technologies directly leverage Bell's approach, making them `device independent' - meaning that one does not even need to believe in quantum mechanics in order to trust in the security (for example) of a secret key distribution protocol~\cite{BarrettHardyKent2005}. Furthermore, if quantum mechanics is assumed, violation of Bell's inequality witnesses (i.e. is sufficient, but not necessary to infer) the existence of entangled quantum states~\cite{Terhal2000}. There exists a hierarchy of states ranging from  Bell-inequality-violating, through entangled and discordant states, with different classes being exploited by different quantum technologies~\cite{AdessoBromleyCianciaruso2016}. 

While the earliest tests of Bell's inequality date back decades, \cite{FreedmanClauser1972,AspectDalibardRoger1982}, systematically closing loopholes in such experiments involved great technological effort and has only been comprehensively achieved very recently, placing the failure of a classical explanation beyond all reasonable doubt~\cite{GiustinaVersteeghWengerowsky2015,HensenBernienDreau2015,ShalmMeyer-ScottChristensen2015}. The strictest test requires exercising precise and rapid control over widely separated, highly isolated physical systems -- infeasible in most physical scenarios. This is true even of engineered systems such as quantum computers~\cite{LaddJelezkoLaflamme2010} as it is of natural systems such as bio-molecular complexes~\cite{WildeMcCrackenMizel2010}. For instance, in 2007, Engel \emph{et al.} presented their evidence for quantum coherence in the excited state dynamics of the Fenna-Matthews-Olson (FMO) light-harvesting complex by way of a long-lived oscillatory signature revealed by two-dimensional electron spectroscopy~\cite{EngelCalhounRead2007}. Although it would be desirable to adapt such experiments (which rely on incidental signatures of quantum coherence and are therefore subject to alternative, classical explanations~\cite{PerlikLincolnSanda2014}) to leverage Bell's test for a more robust confirmation of non-classicality, this seems only a very distant possibility. The encapsulated Bacteriochlorophyll pigments which compose the FMO, for instance, are separated by only a few Angstroms: light traverses such distances in less than an attosecond, making a strict Bell test infeasible with current technology~\footnote{By contrast a recent test over a $\sim$1km separation~\cite{HensenBernienDreau2015} enjoyed a light transit time of a few $\mu$s}. Furthermore it is not necessarily desirable to isolate such systems, since the interaction with the environment is often the subject of great interest, for example playing a potentially crucial role in energy transport~\cite{PlenioHuelga2008,MohseniRebentrostLloyd2008,CarusoChinDatta2009,OReillyOlaya-Cast2014}. Hence the need for protocols tailored to non-isolated systems on very small scales, such as those developed in this paper.

A modification of Bell's test due to Leggett and Garg (LG)~\cite{LeggettGarg1985} concerns correlations across time rather than across space. Instead of local causality they coined the term `macrorealism': the composite view that a sufficiently large system occupies exactly one of its possible states at any given moment, and that this state may be determined in a non-invasive manner. These assumptions codify classical physics but are contradicted by most interpretations of quantum theory. There exist at least three alternative readings of macrorealism~\cite{MaroneyTimpson2014,HermensMaroney2018,AllenMaroneyGogioso2017}: Here we adopt the `eigenstate-mixture' interpretation that is most amenable to experimental test and arguably most relevant~\cite{KneeKakuyanagiYeh2016} to models of dynamical wavefunction collapse~\cite{BassiGhirardi2003}. The original proposal from LG called for determination of several two-time correlation functions -- a daunting challenge in the laboratory. 

Recently, a refined protocol termed `No-Signalling-In-Time'  (NSIT) was developed which constitutes a simpler and more effective test of macrorealism~\cite{LiLambertChen2012,KoflerBrukner2013,ClementeKofler2016,KneeKakuyanagiYeh2016}. As we show below, it may also be thought of as a state coherence witness for isolated systems. The NSIT condition, as the LG inequality before it, is predicated on the negligible effect of earlier measurements on later ones, and forms the basis of this article. The condition is essentially an expression of the classical Kolmogorov consistency conditions relating a probability density over a set of temporally separated measurement outcomes to its marginal distributions -- the quantum violation of which was actually predicted by LG in their original paper~\cite{LeggettGarg1985}. The more recent moniker of NSIT reflects the (dis)analogy with the Bell inequality: famously, quantum correlations are able to surpass those of any local theory but are insufficient to allow signalling across space. In the temporal scenario, quantum states (and, one might add, many plausible hidden-variable models) do not conform to such a compromise, and are quite capable of signalling in time. In fact the spatial correlations achievable in quantum theory are bounded by the so-called Tsirelson's bound~\cite{Cirelson1980}, and are weaker than the most general non-signalling correlations~\cite{PopescuRohrlich1994}. An equivalent bound has been argued to apply in the temporal case to the set of divisible quantum channels~\cite{LePollockPaterek2017}.

The NSIT condition has been shown to be necessary and sufficient for macrorealism~\cite{ClementeKofler2015}, and Fine's theorem ---which states that Bell inequalities form a necessary and sufficient condition for the existence of a single joint probability distribution over all measurement outcomes~\cite{Fine1982} --- does not carry over straightforwardly to the temporal case~\cite{ClementeKofler2016,Halliwell2017}. Coupled with a more favourable experimental outlook, this has lead some to eschew the LG inequality in favour of the NSIT condition~\cite{KneeKakuyanagiYeh2016,WangKneeZhan2017}.
%MR does not imply a diagonal rho. If [M',\rho]=0 then MR picture is possible despite non-diagonal rho.

Experimental violations of macrorealism have been found in a variety of well-isolated optical and solid-state quantum systems; for a review, see~\cite{EmaryLambertNori2014} and subsequent experiments~\cite{RobensAltEmary2016,KneeKakuyanagiYeh2016,FormaggioKaiserMurskyj2016,WangKneeZhan2017,KatiyarBrodutchLu2017}. 
However, inferences about the existence of quantum coherence drawn from such tests make implicit assumptions about the coupling of the system to its environment. For systems such as the FMO complex, however, the LG inequality~\cite{WildeMizel2011} (with a simple dynamical model) and the NSIT condition~\cite{LiLambertChen2012} (with a more sophisticated dynamical model) have only been calculated theoretically. The quantum or classical question has also been theoretically investigated with similar tools in other nano-structured, open quantum systems~\cite{LambertEmaryChen2010}.

We provide the theoretical framework necessary to unambiguously infer quantum coherence in non-isolated systems experimentally. We go beyond isolated systems and specify features other than quantum coherence that can trigger violations. We begin in Section~\ref{sec_classicalisation} by introducing fast and slow variants of a classicalisation operation, key to the rest of the paper. In Section~\ref{sec_isolated} we describe how this operation can be implemented in the laboratory to witness quantum coherence of isolated systems, making a connection to the resource theory of coherence. In Section~\ref{sec_general} we define three experimental protocols relating to non-isolated systems, quantify their cost and deduce which states and processes are able to trigger nonzero values, refining and supplementing pre-existing results. A short discussion on device independence follows in Section~\ref{sec_device}, whereafter we end with concluding remarks in Section~\ref{sec_conclusion}.

\section{Classicalisation operation}
\label{sec_classicalisation}
A key component of the NSIT protocol is the ability to transform a system, described by an unknown density operator $\rho=\sum_{ij} \rho_{ij}|i\rangle \langle j | $, acting on a $d$-dimensional Hilbert space $\mathscr{H}_S$, into a particular classically-equivalent diagonal state. In this context the classical state in question is the unique diagonal density operator exhibiting the same probability distribution as $\rho$, when both are expressed in a pre-ordained and privileged basis $\{|i\rangle\}_i$.  The classicalisation operation  is written $\Gamma(\rho)$. Mathematically, it can be thought of as the result of masking (i.e. multiplying through the element-wise Hadamard product) $\rho$ with $\textrm{diag}(1,1,\ldots,1)$: preserving the diagonal entries but destroying the off-diagonal ones. Note that coherence is here unambiguously taken to mean the nonzero value of at least one of these off-diagonal entries, rather than any other notion such as the coherence of classical waves~\cite{ScholesFlemingChen2017}. 

For the remainder of this paper we drop the `quantum' from `quantum coherence' for brevity.

It is important to realise that coherence is a basis-dependent notion. The basis  $\{| i \rangle\}_i$ with respect to which coherence is witnessed or measured is defined by $\Gamma$, and often arises naturally depending on the physical scenario. This is in analogy with the Bell inequality and entanglement, which relies on a given bi-partitioning of the Hilbert space (usually set by appeal to special relativity and spatial separation).  
%One might wish to consider a basis of spatially distinct `sites' or of energy eigenstates, for example. 
For NSIT, a naturally preferred basis may be indicated by a dominant decoherence channel determined by the form of the coupling to the environment~\cite{MeznaricClarkDatta2013}, or the ability to measure only in specific bases such as energy. Importantly, $\Gamma$ is a valid quantum operation and therefore should be implementable in the laboratory. It may be achieved in at least the following two ways: through artificial dephasing or through a blind measurement. 

Artificial dephasing involves arranging for a random distribution of phase factors $e^{i\theta_j}$ to be applied 
\begin{align}
\Gamma(\rho)=\int d\vec{\theta} p(\vec{\theta})e^{i\sum_j\theta_j |j\rangle\langle j|} \rho e^{-i\sum_k\theta_k |k\rangle\langle k|}
\label{gamma1}
\end{align}
such that the mean of  $e^{i[\theta_j-\theta_k]}$ is at the origin of the complex plane $\forall j,k$. Note that the integrand is the conjugation of $\rho$ with a unitary matrix which is diagonal in the preferred basis. The choice $p(\vec{\theta})=\prod_jp(\theta_j)$ with $p(\theta_j)=\delta(\theta_j-0)/2+\delta(\theta_j-\pi)/2$
%$\prod_ip_i(\theta_i)$ with $p_i(\theta_i=0)=p_i(\theta_i=\pi)=50\%$
achieves the desired effect -- for a qubit this represents a $50\%$ chance of having a phase flip or not. The net operation has been achieved experimentally e.g. by randomising the phase of path-encoded photonic qudits~\cite{WangKneeZhan2017}. 

A blind measurement~\cite{SchildEmary2015}, on the other hand, is simply a measurement in the preferred basis for which the result is discarded: the post-measurement state is not conditioned on the measurement outcome, but is instead subject to the average effect of the transformations corresponding to the different outcomes  
\begin{align}
\Gamma(\rho)=\sum_i |i\rangle  \langle i | \rho|i\rangle \langle i |.
\label{gamma2}
\end{align}
This amounts to taking the system density matrix apart and putting it back together again without the quantum coherences $\rho_{ij}, i\neq j$ so that only populations $\rho_{ii}$ remain~\cite{LiLambertChen2012}. Here the operators for the different measurement outcomes $|i\rangle\langle i|$ are mutually orthogonal: by contrast other studies have sought to test quantumness through the use of weak measurements~\cite{GogginAlmeidaBarbieri2011,MancinoSbrosciaRoccia2018}, where the operators strongly overlap and reveal less information about the system. We also assume that there are $d$ measurement operators, meaning the measurement corresponds to a non-degenerate observable and is of the von-Neumann type~\cite{SchildEmary2015}.

Witnessing quantumess in non-isolated systems depends crucially on the timescale on which classicalisation is achieved.  In the first instance, $\Gamma$ is implemented dynamically -- in `real time' -- much faster than other characteristic timescales in the experiment (particularly the timescales of the environment to which the system might be coupled). In other cases, $\Gamma$ may be implemented on a much slower timescale, often in a piecewise fashion. This is especially the case when the blind measurement implementation of $\Gamma$ is sought, but when measurements fail to leave the system in a state compatible with the measurement outcome. Examples include fluorescence readout of the qubits encoded in spin states of NV centres~\cite{JelezkoGaebelPopa2004} or in energy level of trapped ions~\cite{HartyAllcockBallance2014}, free induction decay in nuclear magnetic resonance~\cite{Jones2011}, absorptive detection of single photons~\cite{Hadfield2009} as well as stimulated processes in nonlinear spectroscopy~\cite{Yuen-ZhouAspuru-Guzik2011}. Although a measurement of the preferred basis is performed -- the populations of the various classical states are inferred -- it is not through a process which can be modelled by projective measurement operators. Take for example the Kraus operators $K_i=|\phi_i\rangle\langle i|$: the various outcomes have the correct probabilities $\textrm{tr}(K_i\rho K_i^\dagger)=\textrm{tr}(|i\rangle\langle i|\rho)$ but the post measurement state is $|\phi_i\rangle\neq|i\rangle$. %Previously, ideal non-demolition measurements have been achieved in a spin ensemble with the use of ancillary qubits~\cite{MoussaRyanCory2010,KneeSimmonsGauger2012}. By contrast, 
The solution that we concentrate on in this paper is to infer the full set of probabilities and then re-prepare the appropriately weighted mixture of classical states from a fiducial state, resulting in $\Gamma(\rho)$.

For isolated systems, fast and slow classicalisation operations are trivially equivalent. The equivalence breaks down when we consider non-isolated systems; it is inadequate to restrict the quantum operation $\Gamma$ to $\mathscr{H}_S$: it must be expanded to the full Hilbert space of system and environment.

\section{NSIT protocol for an isolated system and connection to the resource theory of coherence}
\label{sec_isolated}
\begin{figure}
\centering
\includegraphics[width=\columnwidth]{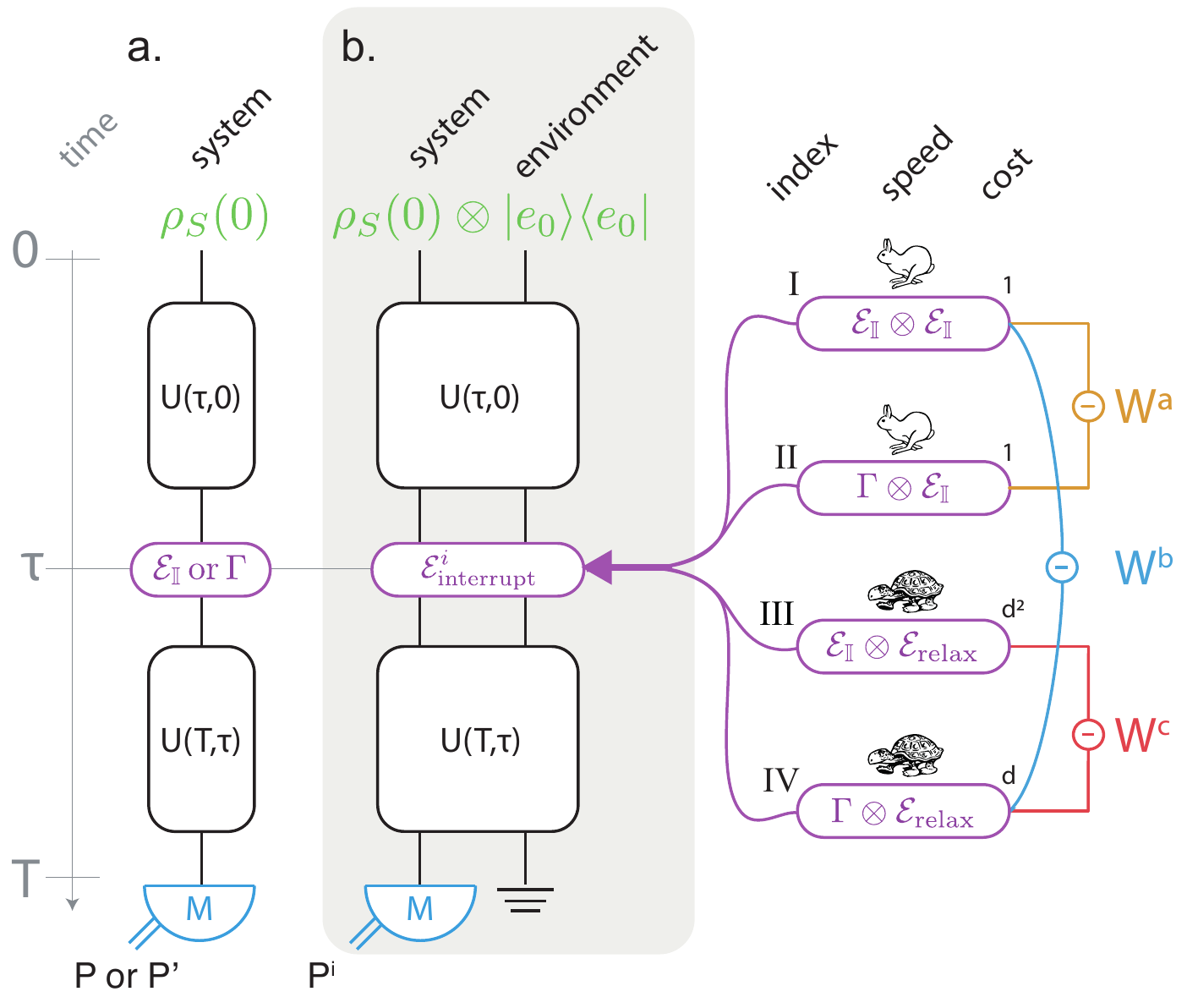}
\caption{Procedures for extracting coherence witnesses (time runs from top to bottom): a. In the usual NSIT protocol the system is considered well isolated b. Our generalisations include an explicit environment, with each interruption experiment (fast operations $i=\RomanNumeralCaps{1},\RomanNumeralCaps{2}$ and slow operations $i=\RomanNumeralCaps{3},\RomanNumeralCaps{4}$) carrying a cost in terms of the number of sub-experiments required. $\mathcal{E}_\mathbb{I}$ is the identity channel. \label{fig1}}
\end{figure}

Consider an isolated system initialised in a fiducial state described by a density operator $\rho_S(t=0)$. It is then allowed to evolve under either its natural Hamiltonian or through active control fields, resulting in a test state $\rho_S(t=\tau)$. The NSIT protocol is based on testing the effect of $\Gamma$, acting at time $t=\tau$, on the test state. For an isolated system, the witness is defined
\begin{align}
W^{\textrm{isolated}}(\rho_S(\tau),M'):&=P-P'\nonumber\\
&=\textrm{tr}_S(M'\left[\rho_S(\tau)-\Gamma(\rho_S(\tau))\right]),
\label{witness}
\end{align}
where $P'$ ($P$) is the probability of the measurement outcome corresponding to the positive operator $M$  occurring at a later time $t=T$, when the earlier $\Gamma$ operation was (was not) performed. $M'$ is the effective measurement operator at $\tau$, which is related to the actual measurement operator $M$ by $M'=U^\dagger(T,\tau)MU(T,\tau)$ -- see Fig~\ref{fig1}a. 
Recently, temporal correlations have been cast into a hierarchy based on the violation of LG's inequality, temporal steerability and temporal non-separability, given the satisfaction of NSIT, i.e. $W^{\textrm{isolated}}=0$~\cite{KuChenLambert2017}. By contrast we seek signatures of $W\neq0$. Note that we leave our witnesses as signed quantities: the absolute value is taken in some of the prior literature~\cite{LiLambertChen2012,SchildEmary2015}.

In order to witness coherence in this way it is necessary for the classicalisation operation $\Gamma$ to be available to the experimenter and  (in the first instance) for the operation to be trusted. In Section~\ref{sec_device} we discuss the possibility of removing the trust, resulting in a device-independent protocol.

It is necessary for the difference in states at $\tau$ to translate into a divergence in measurement outcome probabilities at $T$ for the witness to be triggered -- in other words, $M'$ must be well chosen. If $M$ is a measurement in the preferred basis, then $M'$ is chosen only by $U(T,\tau)$ and the conclusion of Smirne \emph{et al.}~\cite{SmirneEgloffDiaz2018} applies: the multi-time statistics cannot be considered classical if the dynamics generates coherences and subsequently turns them into populations. In fact, considering the best choice for $M'$ makes the connection between tests of macrorealism and the resource theory of coherence.

If $M'$ is chosen optimally, the following inequality is saturated
\begin{align}
|W^{\textrm{isolated}}(\rho_S(\tau),M')|&\leq \max_{M'} \textrm{tr}(M'[\rho_S(\tau)-\Gamma(\rho_S(\tau))]) \nonumber\\
&=: R(\rho_S(\tau))/2.
\label{measure}
\end{align}
The functional
\begin{align}
R(\rho)=||\rho-\Gamma(\rho)||_{\textrm{tr}}
\label{vulnerability_of_coherence}
\end{align}
is twice the trace distance between the state $\rho$ and its classical counterpart $\Gamma(\rho)$. Equivalently, it is the trace norm $||\cdot||_{\textrm{tr}}$ (or sum of singular values~\cite{Bhatia1997}) of the hollow matrix formed from $\rho$ by replacing all diagonal elements  $\langle i|\rho|i\rangle$ with zero. It is a coherence measure enjoying several attractive mathematical properties. Satisfaction of these properties -- it is zero if and only if $\rho$ is diagonal (is in the set of `free states' $\mathcal{I}$) and does not increase under a well-defined class of `free operations' -- qualifies $R(\rho)$ as a coherence monotone under a resource theory of coherence that has `dephasing covariant' operations (those that commute with $\Gamma$~\cite{MeznaricClarkDatta2013}) as the free operations~\cite{MarvianSpekkens2016}. A proof of these properties is given by Marvian and Spekkens~\cite{MarvianSpekkens2016} and also implies monotonicity under the closely related (sub)set of strictly incoherent operations~\cite{YadinMaGirolami2016}. The resource-theoretic approach to coherence is anticipated to shed new light on quantum metrology, thermodynamics, computation and cryptography, speed limits, energy transport, foundational issues and quantum technologies in general~\cite{MarvianSpekkens2016,BaumgratzCramerPlenio2014}. On the other hand, many measures exist and clear operational meanings have yet to be fully worked out. Moreover most of the monotones are not cheaply measurable and for some it is not even known how to compute them. These latter drawbacks do not apply to $R(\rho)$, increasing its relative attractiveness. 

Because $R(\rho)$ is a norm, it also satisfies the property of being convex in $\rho$. Since $\Gamma$ is a resource-destroying map~\cite{LiuHuLloyd2017}, any contractive distance between $\rho$ and $\Gamma(\rho)$ would also qualify as such a monotone. It is not known whether $R(\rho)$ is a monotone under alternative sets of free operations -- e.g. `incoherent' operations (those that do not create coherence)~\cite{BaumgratzCramerPlenio2014}. The $ l_1$-norm of coherence $C_{l_1}(\rho)=\sum_{i\neq j }|\langle i | \rho| j \rangle|$ \emph{is} a coherence monotone under incoherent operations but has no known operational meaning nor method of determination other than via full state tomography, requiring order $d^2$ experiments. 

The availability of a classicalisation operation $\Gamma$, therefore, greatly reduces the experimental costs associated with learning about certain resource monotones (compared to those monotones that require full knowledge of the state). Depending on how $\Gamma$ is implemented, as few as two experiments are required. Li \emph{et al.} introduce the idea of partial summation to further reduce the required number of experiments~\cite{LiLambertChen2012} -- a procedure we refine in Appendix~\ref{partial_sum}. $R(\rho)$ might be given the name `vulnerability of coherence' to underline its interpretation as measuring the extent to which the coherence in $\rho$ is affected by $\Gamma$. 

To see that $W^{\textrm{isolated}}(\rho_c,M)=R(\rho_c)/2=0$ for classical states $\rho_c\in\mathcal{I}$, simply note that $\Gamma(\rho_c)=\rho_c$ for any state that can be written as a diagonal density operator in the preferred basis $\rho_c=\sum_i p_i |i\rangle\langle i |$. Note that such states are convex combinations of states drawn from the preferred basis. At the other extreme, we have
\begin{thm}
\label{thm_isolated}
The maximum value of $|W^{\textup{isolated}}|$ is given by
\begin{align}
|W^{\textup{isolated}}|\leq \max_\rho R(\rho)/2\leq 1-1/d
\end{align}
where $d$ is the dimension of $\mathscr{H}_S$.
\end{thm}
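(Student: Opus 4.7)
The first inequality, $|W^{\textup{isolated}}| \leq \max_\rho R(\rho)/2$, is immediate from the bound already established in Eq.~(4), $|W^{\textup{isolated}}(\rho_S(\tau), M')| \leq R(\rho_S(\tau))/2$: one simply takes the supremum of the right-hand side over density operators $\rho_S(\tau)$. The substantive part is therefore the dimensional bound $\max_\rho R(\rho) \leq 2(1-1/d)$.

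My plan for this second inequality is to express $\Gamma$ as a uniform average over a cyclic group of diagonal unitaries and then exploit unitary invariance of the trace norm. Let $\omega = e^{2\pi i / d}$ and $Z = \textrm{diag}(1, \omega, \ldots, \omega^{d-1})$. The roots-of-unity identity $\tfrac{1}{d}\sum_{k=0}^{d-1}\omega^{k(i-j)}=\delta_{ij}$ verifies entry-by-entry that
\begin{align}
\Gamma(\rho) = \frac{1}{d}\sum_{k=0}^{d-1} Z^k \rho (Z^\dagger)^k,
\end{align}
which is just a particular choice of $p(\vec\theta)$ in Eq.~(1). Separating off the $k=0$ term gives
\begin{align}
\rho - \Gamma(\rho) = \frac{1}{d}\sum_{k=1}^{d-1}\left[\rho - Z^k \rho (Z^\dagger)^k\right].
\end{align}
Applying the triangle inequality for the trace norm, combined with $\|\rho\|_{\textrm{tr}} = \|Z^k \rho (Z^\dagger)^k\|_{\textrm{tr}} = 1$ (unitary conjugates of density operators are density operators), I then obtain
\begin{align}
R(\rho) = \|\rho - \Gamma(\rho)\|_{\textrm{tr}} \leq \frac{1}{d}\sum_{k=1}^{d-1}2 = 2\Bigl(1-\frac{1}{d}\Bigr),
\end{align}
and dividing by two closes the argument.

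The only non-trivial step is recognising which representation of $\Gamma$ to use: the naive triangle inequality $\|\rho - \Gamma(\rho)\|_{\textrm{tr}} \leq \|\rho\|_{\textrm{tr}} + \|\Gamma(\rho)\|_{\textrm{tr}} = 2$ falls short by exactly the factor $(d-1)/d$ that the theorem claims, and the saving emerges precisely because one of the $d$ unitary twirls that comprise $\Gamma$ is the identity and therefore contributes zero to the sum. Tightness --- not required for the stated inequality but worth noting --- is witnessed by the maximally coherent pure state $|\psi\rangle = \tfrac{1}{\sqrt{d}}\sum_i|i\rangle$, for which a direct calculation (the off-diagonal block is a rank-one matrix with one positive eigenvalue $1-1/d$) saturates both bounds.
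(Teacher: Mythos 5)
Your proof is correct, and it takes a genuinely different route from the paper's. The paper (Appendix~B) evaluates $\max_\rho R(\rho)/2$ as the induced trace-norm distance between $\mathcal{E}_{\mathbb{I}}$ and $\Gamma$ by an explicit double optimisation over pure states and projective measurements via Lagrange multipliers, which simultaneously proves that the optimal $M$ is a rank-one projector and identifies the maximiser $|\psi_i|=1/\sqrt{d}$. You instead write $\Gamma$ as the uniform twirl $\Gamma(\rho)=\tfrac{1}{d}\sum_{k=0}^{d-1}Z^k\rho (Z^\dagger)^k$ over the cyclic group generated by the clock operator, peel off the identity term, and apply the triangle inequality together with unitary invariance of the trace norm; tightness is then checked separately on $|+\rangle$. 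Both arguments are valid, but yours is far shorter and more transparent about \emph{why} the answer is $1-1/d$ rather than $1$: exactly one of the $d$ twirling unitaries acts trivially. Your approach also loses nothing in this context, since the theorem only asserts an upper bound and you supply the saturating state anyway (your parenthetical describing the hollow matrix $\tfrac{1}{d}(J-\mathbb{I})$ as ``rank-one'' is a slip --- it is $J/d$ that is rank one --- but its single positive eigenvalue is indeed $1-1/d$ and the trace norm is $2(1-1/d)$ by tracelessness, so the conclusion stands). A further advantage worth noting: because $\Gamma\otimes\mathcal{E}_{\mathbb{I}}=\tfrac{1}{d}\sum_k (Z^k\otimes\mathbb{I})\,\cdot\,(Z^{\dagger k}\otimes\mathbb{I})$ is the same kind of twirl on the enlarged space, your argument immediately gives $\|\mathcal{E}_{\mathbb{I}}-\Gamma\|_\diamond/2\leq 1-1/d$ as well, reproducing Theorem~\ref{thm_diamond} (the paper's Appendix~E) with essentially no extra work, whereas the paper must rerun the entire Lagrange-multiplier computation there.
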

\begin{proof}
This was proved in Ref.~\cite{SchildEmary2015}. For completeness we give our own explicit proof in Appendix~\ref{max_isolated}, where $\max_\rho R(\rho)/2$ is recognised as the induced trace norm distance between the identity channel and $\Gamma$.
 \end{proof} 
This maximum can be achieved for the maximally coherent state $\rho\rightarrow |+\rangle\langle +|$ with $|+\rangle=\sum_i |i\rangle/\sqrt{d}$. Due to the dependence of this upper bound on the Hilbert space dimension one can also certify a lower bound on the possibly-unknown dimension of $\mathscr{H}_S$ via $d\geq 1/(1-|W^{\textrm{isolated}}|)$. 

Having made the connection between the NSIT witness condition and the resource theory of coherence, we now proceed to generalise the witness to non-isolated systems.
% the case where the system is coupled to an environment. 

\section{NSIT for non-isolated systems}
\label{sec_general}
In most realistic experiments, the existence of an environment representing uncontrollable degrees of freedom must be acknowledged. Although one is generally free to define the system-environment divide anywhere one pleases, it is very often set by experimental limitations. For instance during the excited state dynamics in light-harvesting complexes, vibrational modes that interact with the various electronic excited states; or in spin based quantum information media, a solid-state environment containing nuclear spins can cause uncontrollable interactions and decoherence on the timescale of the system dynamics. 

In allowing for such an environment, represented by a Hilbert space $\mathscr{H}_E$ of arbitrary dimension, we will consider two pairs of interruption operations at the intermediate time $\tau$, which generalise the operations both of doing nothing to and of classicalising the system. Joint system-environment states $\rho_{SE}$ operate on the joint Hilbert space $\mathscr{H}_S\otimes\mathscr{H}_E$.
As shown in Fig~\ref{fig1}b, various probabilities $P^i$ ($i=\RomanNumeralCaps{1},\RomanNumeralCaps{2},\RomanNumeralCaps{3},\RomanNumeralCaps{4}$) are defined using a measurement at a later time $T$ of the form:
\ben
P^{i}&=&\textrm{tr}[(M\otimes \mathbb{I})(\mathcal{U}_{T,\tau}\circ\mathcal{E}^{i}_{\textrm{interrupt}}\circ\mathcal{U}_{\tau,0}(\rho_{SE}(0)))]\nonumber\\
&=& \textrm{tr}_S[M \textrm{tr}_E (\mathcal{U}_{T,\tau}\circ\mathcal{E}^{i}_{\textrm{interrupt}}\circ\mathcal{U}_{\tau,0}(\rho_{SE}(0))) ]\nonumber\\
&=& \textrm{tr}_S[M\rho^{i}_S(T)] .
\label{interruption_equations}
\een
The symbol $\circ$ stands for the concatenation of superoperators. 
Here $\rho_{SE}(0)=\rho(0)\otimes |e_0\rangle\langle e_0|$, reflecting our assumption that the initial state of system and environment is a product at $t=0$ (we will not exclude the possibility of correlations at other times, however). Here $|e_0\rangle$ is the ground state of the environment, also the equilibrium state at zero temperature. The extension to mixed states of the environment  -- such as its finite temperature thermally equilibrated state -- is straightforward and treated in Appendix~\ref{mixed_env}.
$\mathcal{U}_{t_i,t_j}(\rho_{SE}) = U(t_i,t_j)\rho_{SE} U(t_i,t_j)^\dagger$ represent unitary, joint system-environment evolutions propagating the state from $t_j$ to $t_i$. Since the measurement $M$ acts only on the system, the statistics depend only on the reduced state of the system  $\rho^i_S(T)$ with the environment traced out. 

We consider four forms that $\mathcal{E}_{\textrm{interrupt}}$ (super-operating on $\mathscr{H}_S\otimes\mathscr{H}_E$) may take. The first, minimal approach is to merely expand our operations in the usual way by including an environment which undergoes trivial evolution during the interruption. We therefore have the operations
\begin{align}
\mathcal{E}^{\RomanNumeralCaps{1}}_{\textrm{interrupt}}=& \qquad \mathcal{E}_\mathbb{I}\otimes\mathcal{E}_\mathbb{I}\\
\mathcal{E}^{\RomanNumeralCaps{2}}_{\textrm{interrupt}}=& \qquad \Gamma\otimes\mathcal{E}_\mathbb{I}
\end{align}
which we respectively title the `do nothing' and `dynamically classicalise' operations. These are interruptions where, respectively, nothing at all is performed or $\Gamma$ is applied to the system only on a timescale much \emph{faster} than the thermal relaxation of the environment $\mathcal{E}_{\textrm{relax}}(\rho)=|e_0\rangle\langle e_0|$, -- a channel acting on $\mathscr{H}_E$. Note that $\mathcal{E}_{\mathbb{I}}(\rho)=\mathbb{I}\rho\mathbb{I}=\rho$ acts on either $\mathscr{H}_S$ or $\mathscr{H}_E$: to be clear, we do not require the system and/or environment to undergo trivial dynamics overall, just that it is not actively interrupted at $\tau$. 

The second pair of interruptions are achieved piecewise on a timescale much \emph{slower} than the thermal relaxation of the environment: the net operation therefore appears as if we have intervened into the dynamics of the environment and caused it to reset (although we simply allow it to re-equilibrate to $|e_0\rangle$). Such an approach would be typical when measurements destroy the system of interest (such as photodetection) and require that the experiment be started afresh (perhaps even with a different instance of the physical system prepared in an identical state -- see the discussion in Sec.~\ref{sec_classicalisation}). The operations are 
\begin{align}
\mathcal{E}^{\RomanNumeralCaps{3}}_{\textrm{interrupt}}=& \qquad  \mathcal{E}_\mathbb{I}\otimes\mathcal{E}_{\textrm{relax}}  \label{eq:int3}\\
\mathcal{E}^{\RomanNumeralCaps{4}}_{\textrm{interrupt}}=&\qquad \Gamma\otimes\mathcal{E}_{\textrm{relax}} \label{eq:int4}
\end{align}
interruptions which we respectively title `reset environment' and `piecewise classicalise'.  We stress that these are names that reflect the effective operations that are applied, rather than doing justice to their actual implementation, which we now elaborate on. Resetting the environment only ($\RomanNumeralCaps{3}$) is the most demanding of all the interruption operations; it can be achieved by performing full state tomography at $\tau$, so that the system can be re-prepared in its reduced state long after the environment has fully relaxed to equilibrium. Piecewise classicalisation ($\RomanNumeralCaps{4}$) is simpler, and can be achieved through tomography of system populations only, followed by re-preparation of each of the appropriately weighted classical states $|i\rangle$ at $\tau$. The net operation effectively re-prepares the system in the classicalised version of the state it was in at $\tau$ while allowing the environment to re-equilibrate. The second pair of interruption experiments are expensive in the sense that they demand order $d^2$ or $d$ experiments respectively.

%The idea of interrupting the system has been employed recently to provide an operational Markov condition for quantum processes~\cite{PollockRodriguez-RosarioFrauenheim2018} -- in that case, however, the interruption is a `causal break', meaning that it does not depend on its input. None of our interruptions have this property. 

Given operations $\mathcal{E}^{i}_{\textrm{interrupt}}$, we construct three new witnesses 
\begin{align}
W^a:=P^{\RomanNumeralCaps{1}}-P^{\RomanNumeralCaps{2}}
\label{Wadef}
\end{align}
(which responds to system coherence and/or quantum system-environment correlations), 
\begin{align}
W^b:=P^{\RomanNumeralCaps{1}}-P^{\RomanNumeralCaps{4}}
\label{Wbdef}
\end{align}
(which responds to system coherence and/or coupling to a non-stationary environment) and 
\begin{align}
W^c:=P^{\RomanNumeralCaps{3}}-P^{\RomanNumeralCaps{4}}
\label{Wcdef}
\end{align}
(which responds to system coherence only). We have $|W^{a,b,c}|<1$ in all cases, although we derive tighter bounds below. Our operations and witnesses are summarised in Fig.~\ref{fig1}b. We will now proceed to show in detail why these witnesses respond to different aspects of the system and environment state and the various subtleties they encapsulate.

\subsection{$W^a$: Fast classicalisation}
\label{subsec_Wa}
\begin{figure*}
\centering
\includegraphics[width=1.8\columnwidth]{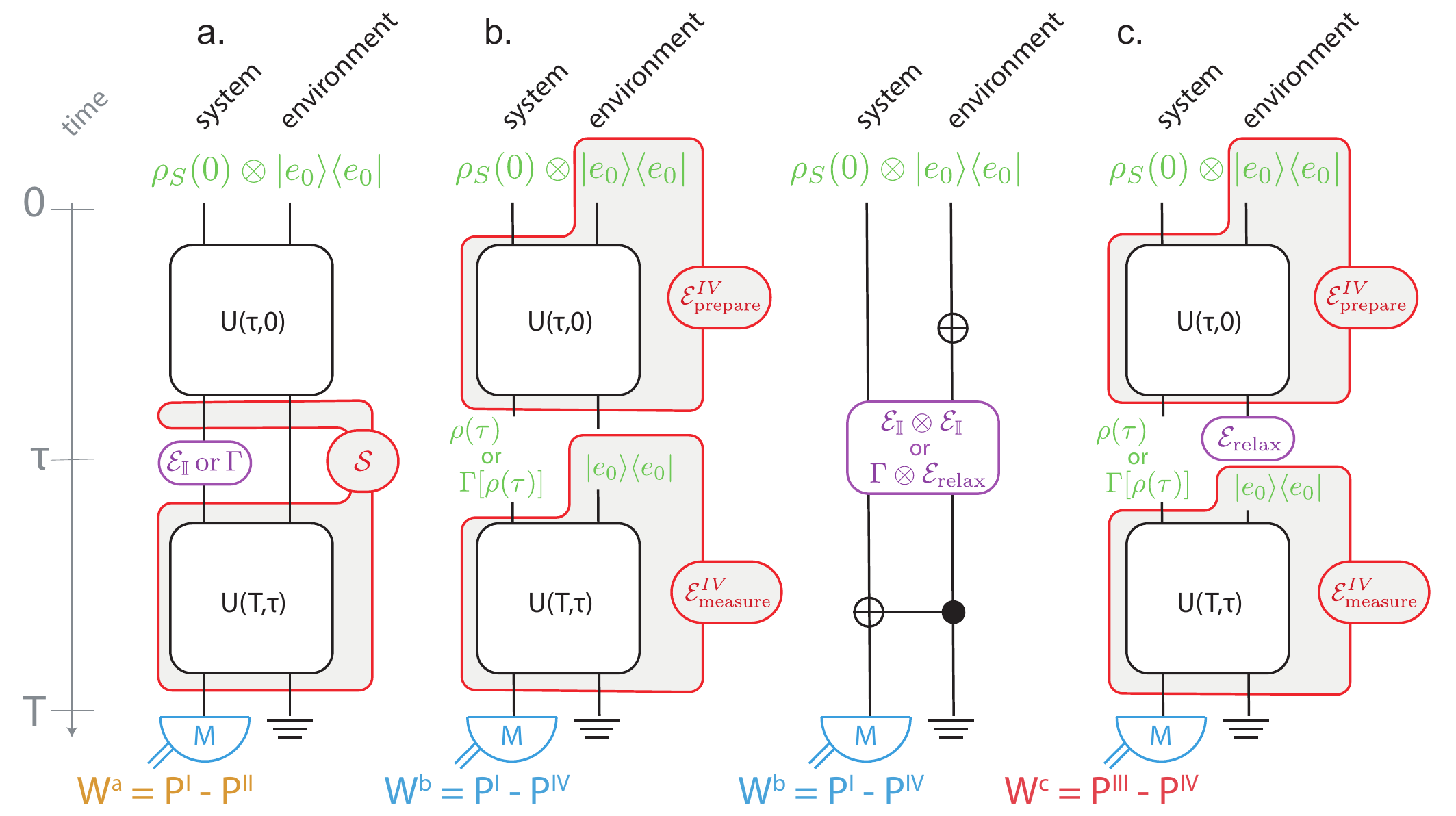}
\caption{a. In the general case of a non-isolated system, it is not possible to separate preparation and measurement of the system state at $\tau$ as is possible for an isolated system. Instead, a generalised witness compares the effect of a fixed superchannel $\mathcal{S}$ (which represents the combined effect of the correlated system-environment state at $\tau$ and their joint evolution) on an input interruption channel: either the identity channel $\mathcal{E}_{\mathbb{I}}$ or the classicalisation channel $\Gamma$ (Eq.~\eqref{gamma1} or~\eqref{gamma2}) acting on the system. The associated witness $W^a$ is sensitive to system coherence and/or quantum system-environment correlations at $\tau$. b. When the Born approximation holds, the environment stays in the same state at all times and therefore need not be actively reset (left panel). $\mathcal{E}_{\textrm{measure}}^{\RomanNumeralCaps{4}}$ is used to effectively alter the final measurement that is performed. Otherwise, when the Born approximation fails, classical models may explain the discrepancy between $P$ and $P'$ (right panel). c. The Born approximation can be enforced by performing full quantum state tomography of the system at $\tau$. %Quantum channels. $\mathcal{E}_{\textrm{prepare}}^{\RomanNumeralCaps{4}}$, which represents the combined effect of the preparation and the interaction with the environment, is used to prepare the system state at $\tau$, and . 
The associated witness $W^c$ is sensitive only to coherence in the reduced state of the system at $\tau$, but requires order $d^2$ experiments due to the implicit full state tomography required. \label{fig2} }
\end{figure*}

It is only possible to test $W^a$ if one has sufficient control to dynamically classicalise the system: that is, to apply $\Gamma$ in `real time' without stopping the experiment. This was the approach showcased in Ref.~\cite{KneeKakuyanagiYeh2016}. According to Li \emph{et al.}~\cite{LiLambertChen2012}, $W^a$ may uncover entanglement between system and environment --- a conclusion we are able to refine somewhat.

Coupling to an environment generally leads to a departure from unitary dynamics of the system state, and is commonly treated with the completely positive (CP) map formalism (otherwise known as the operator sum representation, or quantum operations formalism~\cite{NielsenChuang2004}). Since the joint system-environment state does \emph{not} factorise at $\tau$ for interruptions $\RomanNumeralCaps{1}$ and $\RomanNumeralCaps{2}$, however, we must adopt a more general approach than is allowed by the CP map formalism; for example the `superchannel' formalism developed by Modi~\cite{Modi2012}. Moreover, we are not able to claim to witness properties of the system alone (not even of the reduced state at $\tau$). We shall then see that $W^a$ is sensitive to any changes in the joint system-environment state (and their subsequent evolution) induced by applying $\Gamma$ to the system.
 
Modi introduces the superchannel $\mathcal{S}$ as an object with six indices which transforms a quantum channel (superoperating on $\mathscr{H}_S$) into a quantum state (operating also on $\mathscr{H}_S$). It represents everything in the protocol apart from the part of the interruption channel that acts on the system; namely the system-environment joint state at $\tau$ as well as their unitary dynamics -- see Fig.~\ref{fig2}a.. That is, it acts as 
\ben
\mathcal{S}\bullet\mathcal{E}&=&\rho(T)\nonumber\\
\sum_{r',r'',s',s''}\mathcal{S}_{rr'r''ss's''}\mathcal{E}_{r'r''s's''}&=&\rho_{rs}(T).
\een
Upon defining 
\ben
\mathcal{S}_{rr'r''ss's''}=\sum_{\alpha,\epsilon,\beta}U_{r\epsilon r'\alpha}(T,\tau)\rho^{SE}_{r''\alpha s''\beta}(\tau)\bar{U}_{s\epsilon s'\beta}(T,\tau),\nonumber\\
\een
(with complex conjugates denoted with a bar), we are able to write, using Eqs.~\eqref{interruption_equations} and~\eqref{Wadef}
\ben
W^a=W^a(\mathcal{S},M)= \textrm{tr}_S[M \mathcal{S}\bullet( \mathcal{E}_{\mathbb{I}}-\Gamma)]. 
\een
It is therefore the superchannel itself that we are investigating with $W^a$, rather than the (reduced) density matrix of the system. Nevertheless, we shall find it instructive to deconstruct the superchannel into its constituent parts -- namely, the joint system environment state at $\tau$ and the joint unitary evolution $\mathcal{U}_{T,\tau}$ afterwards -- to draw our conclusions. 

In the isolated case we were able to witness the existence of coherent superpositions of classical states of the system at $\tau$: essentially by falsifying a classical view that ignores the coherences (off-diagonal elements) of the system density operator. Here we are able to test a generalisation of this idea -- namely,  to test the supposition that the coherences of the system \emph{as well as the parts of the environment that are correlated with the system coherences} can be ignored in the description of the experiment. To see this more clearly, let us write 
\begin{align}
\rho_{SE}(\tau)=\rho_S(\tau)\otimes\rho_E(\tau)+\chi_{SE}(\tau)
\label{correlation_matrix}
\end{align}
 for some correlation matrix $\chi$ and marginal states $\rho_S=\textrm{tr}_E(\rho_{SE})$ and $\rho_E=\textrm{tr}_S(\rho_{SE})$. Now,
\begin{align}
\label{Wa_contribs}
W^a(\mathcal{S},M)=& \textrm{tr}_S[M''\{\rho_S(\tau)-\Gamma(\rho_S(\tau))\}]\\
&+\textrm{tr}[(M\otimes\mathbb{I})\mathcal{U}_{T,\tau}\circ(\mathcal{E}_{\mathbb{I}}\otimes\mathcal{E}_{\mathbb{I}}-\Gamma\otimes\mathcal{E}_{\mathbb{I}})(\chi_{SE}(\tau))]\nonumber
\end{align}
Here $M''=\textrm{tr}_E(U(T,\tau)^\dagger(M\otimes\mathbb{I})U(T,\tau))$. The first term represents a contribution to the witness by coherence in the reduced state of the system, as in the isolated case. The second term represents new contributions emanating from the correlations between system and environment. The important point here is that, according to macrorealism these extra contributions would be zero. Take for example the case of a two-qubit system-environment in a Bell state:  $\rho_{SE}(\tau)=|\phi^+\rangle\langle \phi^+|=\frac{1}{2}[|0\rangle\langle0|\otimes |0\rangle\langle 0|+|1\rangle\langle1|\otimes |1\rangle\langle 1|+|0\rangle\langle1|\otimes |0\rangle\langle 1|+|1\rangle\langle0|\otimes |1\rangle\langle 0|]$. The first two terms represent classical correlations and are stabilised (unaffected) by the $\Gamma\otimes\mathcal{E}_{\mathbb{I}}$ operation, and therefore cancel from the differential witness $W^a$. The final two terms, on the other hand, are quantum correlations which are destroyed by $\Gamma\otimes\mathcal{E}_{\mathbb{I}}$. Note the reduced state of the system $\rho_S(\tau)$ is maximally mixed and therefore has no coherence (i.e. is diagonal). In such a case the reduced state of the system is no reflection of the global coherence properties of system and environment, since the quantum correlations represented by $\chi_{SE}$ can still trigger the witness due to their vulnerability to the classicalisation of the system. Hence our statement that $W^a$ reports on system coherence and/or nonclassical system-environment correlations. Again it is necessary for the measurement $M''$ to be well chosen (by the combination of $M$ and action of $U(T,\tau)$). $W^a$ therefore enables one to distinguish, with an appropriate system-environment evolution $U(T,\tau)$, proper and improper mixtures~\cite{dEspagnat2001} in the reduced state of the system at $\tau$. The former have $W^a=0$ whereas the latter can have $W^a\neq0$.  This is something which is not possible by performing local state tomography of the system at $\tau$, a procedure which only reports on properties of the reduced system state $\rho_S(\tau)=\textrm{tr}_E(\rho_{SE}(\tau))$.

A sufficient condition for the joint system environment state to give $W^a=0$ is easily seen to be membership of the set of incoherent-quantum states~\cite{YadinMaGirolami2016,AdessoBromleyCianciaruso2016,ChitambarStreltsovRana2016}, defined as
\begin{align}
\mathcal{IQ}:=\left\{\rho_{SE} : \rho_{SE} = \sum_i p_i |i\rangle\langle i | \otimes \rho_E^i\right\}.
\label{iq}
\end{align}
The set is so named since a notion of classicality (`incoherence') is imposed on the system but not on the environment part (which remains `quantum') of each term in the convex combination. Importantly, this set is a proper subset of the set of non-entangled states, meaning that its complement is a proper superset of the set of entangled states. In other words the system and environment need not be entangled, nor the reduced system state have any coherence for the witness to be triggered (a constructive example is given below). $W^a\neq0$ implies the test state $\rho_{SE}$ is not incoherent-quantum. This statement is an improvement over the findings of Li \emph{et al.}~\cite{LiLambertChen2012}, who argued that fast classicalisation should disallow false positives from the entire set of classically-correlated (non-entangled) states. Our analysis implies that some such states will in fact trigger $W^a$, identifying a class of scenarios which was missing from previous analyses.

A natural question arises: what is the maximum value of $W^a$? Maximising over the measurement for any fixed $\rho_{SE}(\tau)$ results in the trace distance between $\rho_{SE}(\tau)$ and the particular incoherent-quantum state whose marginal system state has the same diagonal entries. This is a faithful measure of distance to the set \eqref{iq}, since:
\begin{thm}
\label{thm_faithful}
The measure $\max_M W^a(\mathcal{S},M)=||\rho_{SE}-(\Gamma\otimes\mathcal{E}_{\mathbb{I}})\rho_{SE}||_{\textrm{\normalfont tr}}/2=0$ if and only if $\rho_{SE}\in\mathcal{IQ}$. %is a necessary and sufficient condition for
\end{thm}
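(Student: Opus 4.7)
The plan is two-part: first show that $\max_M W^a(\mathcal{S},M)$ coincides with $\|\rho_{SE}-(\Gamma\otimes\mathcal{E}_{\mathbb{I}})\rho_{SE}\|_{\textrm{tr}}/2$, and then show that this trace distance vanishes precisely on $\mathcal{IQ}$.

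For the first part I would define the traceless Hermitian operator $\Delta := \rho_{SE}(\tau)-(\Gamma\otimes\mathcal{E}_{\mathbb{I}})\rho_{SE}(\tau)$ and rewrite the witness in Heisenberg form as
\begin{align}
W^a = \textrm{tr}\bigl[\tilde{M}\,\Delta\bigr],\qquad \tilde{M}:=U(T,\tau)^\dagger(M\otimes\mathbb{I})U(T,\tau),\nonumber
\end{align}
where $0\le\tilde{M}\le\mathbb{I}$ acts on $\mathscr{H}_S\otimes\mathscr{H}_E$. The standard variational characterisation $\|X\|_{\textrm{tr}}=2\max_{0\le P\le\mathbb{I}}\textrm{tr}(PX)$, valid for any traceless Hermitian $X$ with the optimum attained on the projector onto the positive part of $X$, then yields the claimed trace-distance identity, provided the set of effective measurements $\tilde{M}$ is rich enough to approximate such projectors — which follows from allowing joint optimisation over $M$ and $U(T,\tau)$.

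For the second part I would expand $\rho_{SE}$ in the preferred system basis as a block matrix $\rho_{SE}=\sum_{ij}|i\rangle\langle j|\otimes A_{ij}$, with each $A_{ij}$ an operator on $\mathscr{H}_E$. Either definition of the classicalisation operation $\Gamma$ then yields $(\Gamma\otimes\mathcal{E}_{\mathbb{I}})\rho_{SE}=\sum_i|i\rangle\langle i|\otimes A_{ii}$, so $\Delta=0$ is equivalent to $A_{ij}=0$ for every $i\neq j$. Positivity of $\rho_{SE}$ forces each diagonal block $A_{ii}\ge 0$, and setting $p_i:=\textrm{tr}(A_{ii})$ together with $\rho_E^i:=A_{ii}/p_i$ (choosing $\rho_E^i$ arbitrarily whenever $p_i=0$) recasts $\rho_{SE}$ as $\sum_i p_i|i\rangle\langle i|\otimes\rho_E^i$, which is exactly the defining form of $\mathcal{IQ}$. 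The converse direction is immediate: any such convex combination is manifestly a fixed point of $\Gamma\otimes\mathcal{E}_{\mathbb{I}}$.

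The main subtlety, and likely main obstacle, lies in the first step. The theorem writes $\max_M$, but to actually saturate the variational formula one also needs freedom over the intervening evolution $U(T,\tau)$, because a fixed $U$ constrains $\tilde{M}$ to have degeneracies dictated by $\dim\mathscr{H}_E$. I would handle this either by interpreting the optimisation as implicitly including this freedom — consistent with the experimental posture of engineering both the post-$\tau$ dynamics and the final readout — or, for the \emph{iff} claim alone, by noting that as $U$ and $M$ range over their domains the operators $\tilde{M}$ suffice to detect any nonzero Hermitian $\Delta$, which is all that is required to equate the zero of $\max_M W^a$ with the zero of the trace distance.
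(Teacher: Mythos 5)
Your proof is correct, and the substantive half --- that the trace distance vanishes iff $\rho_{SE}\in\mathcal{IQ}$ --- takes a genuinely different and more economical route than the paper's. The paper proves the reverse direction by spectrally decomposing $\rho_{SE}=\sum_k P_k|\psi^k\rangle\langle\psi^k|$, writing each eigenvector in Schmidt form, pushing $\Gamma\otimes\mathcal{E}_{\mathbb{I}}$ through the resulting double sum, and then verifying by hand that the objects $p_l$ and $\rho_E^l$ so produced form a probability distribution over density operators. Your block decomposition $\rho_{SE}=\sum_{ij}|i\rangle\langle j|\otimes A_{ij}$ reaches the same conclusion in two lines: $\Gamma\otimes\mathcal{E}_{\mathbb{I}}$ deletes exactly the off-diagonal blocks, and positivity of $\rho_{SE}$ makes the surviving diagonal blocks sub-normalised density operators, which is precisely the $\mathcal{IQ}$ form. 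This is cleaner and avoids the Schmidt-coefficient bookkeeping. On the first equality you are in fact more careful than the paper, which asserts in the main text that maximising over the measurement yields the trace distance and confines its appendix proof to the trace-distance iff. Your worry about a fixed $U(T,\tau)$ is well founded: for $U(T,\tau)=\mathbb{I}$ the witness reduces to $\textrm{tr}[M(\rho_S-\Gamma(\rho_S))]$, which vanishes for the paper's own example state $(1-\epsilon)\mathbb{I}/d^2+\epsilon|\Psi\rangle\langle\Psi|$ even though that state lies outside $\mathcal{IQ}$; so the theorem genuinely requires the implicit freedom over the effective observable $M''$ (i.e.\ over $U(T,\tau)$ as well as $M$) that you identify. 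Making that optimisation explicit is a strengthening of the statement, not a defect of your argument.
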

\begin{proof}
See Appendix~\ref{ns_iq}.
\end{proof}
This measure is thus stronger than basis-dependent discord, which was shown to be a non-faithful measure due to its ascribing zero to some states outside of $\mathcal{IQ}$~\cite{YadinMaGirolami2016}. Other faithful measures exist, such as the `distillable coherence of collaboration' ~\cite{ChitambarStreltsovRana2016}.
Maximising over both measurements and states yields:
\ben
|W^a(\mathcal{S},M) |&\leq& \max_{\rho,M''} \textrm{tr}(M''[\mathcal{E}_{\mathbb{I}}\otimes\mathcal{E}_{\mathbb{I}}-\Gamma\otimes\mathcal{E}_{\mathbb{I}}]\rho)\nonumber\\
&=&\max_\rho||[\mathcal{E}_{\mathbb{I}}\otimes\mathcal{E}_{\mathbb{I}}-\Gamma\otimes\mathcal{E}_{\mathbb{I}}](\rho)||_{\textrm{tr}}/2\nonumber\\
&=:&||\mathcal{E}_{\mathbb{I}}-\Gamma||_\diamond/2
\label{dn}
\een
which is nothing other than the diamond norm distance~\cite{JohnstonKribsPaulsen2009} between the identity map and the classicalisation map. 
\begin{thm}[Diamond norm distance between identity and classicalisation channels]
\label{thm_diamond}
The diamond-norm distance between the identity channel and the classicalisation channel 
\begin{align}
||\mathcal{E}_\mathbb{I}-\Gamma||_\diamond/2 = 1-1/d.
\end{align}
where $d$ is the dimension of the Hilbert space $\mathscr{H}_S$ upon which those channels super-operate.
\end{thm}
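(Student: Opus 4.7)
The plan is to establish $\|\mathcal{E}_\mathbb{I}-\Gamma\|_\diamond = 2(1-1/d)$ by matching upper and lower bounds, then divide by two. For the upper bound I would exploit a mixed-unitary decomposition of $\Gamma$: specialising the random-phase representation of Eq.~\eqref{gamma1} to a uniform discrete distribution over the $d$-th roots of unity gives $\Gamma(\rho) = \tfrac{1}{d}\sum_{k=0}^{d-1}Z^k\rho Z^{-k}$ with $Z = \sum_j e^{2\pi i j/d}\ket{j}\bra{j}$. Rearranging,
\begin{align*}
\mathcal{E}_\mathbb{I}-\Gamma = \frac{d-1}{d}\,\mathcal{E}_\mathbb{I} - \frac{1}{d}\sum_{k=1}^{d-1}\mathcal{U}_{Z^k},
\end{align*}
with $\mathcal{U}_V(\rho) = V\rho V^\dagger$. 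Each unitary conjugation has unit diamond norm because it preserves the trace norm under any ancillary extension, so the triangle inequality gives $\|\mathcal{E}_\mathbb{I}-\Gamma\|_\diamond \leq (d-1)/d + (d-1)/d = 2(d-1)/d$.

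For the lower bound I would feed in the maximally entangled state $\ket{\phi^+} = d^{-1/2}\sum_i\ket{ii}$ on $\mathscr{H}_S\otimes\mathscr{H}_{S'}$. A direct calculation yields
\begin{align*}
((\mathcal{E}_\mathbb{I}-\Gamma)\otimes\mathcal{E}_\mathbb{I})\proj{\phi^+} = \frac{1}{d}\sum_{i\neq j}\ket{ii}\bra{jj},
\end{align*}
supported on the $d$-dimensional subspace $\mathrm{span}\{\ket{ii}\}$ and restricting there to $\tfrac{1}{d}(J-I)$, with $J$ the $d\times d$ all-ones matrix. Since $J$ has spectrum $\{d,0,\ldots,0\}$, the spectrum of $J-I$ is $\{d-1,-1,\ldots,-1\}$ and the trace norm evaluates to $((d-1)+(d-1))/d = 2(d-1)/d$, saturating the upper bound. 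Combining the two bounds and dividing by two gives the stated equality.

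None of the individual ingredients is deep; the whole proof rests on two modelling choices. The first is selecting the mixed-unitary representation of $\Gamma$, which turns the diamond-norm upper bound into one-line triangle inequality rather than a convex program over ancilla-extended states. The second is the guess of $\ket{\phi^+}$ as the extremal input: it is the natural ansatz because the diamond norm is always saturated by a pure state and because $\Gamma$ is symmetric under the permutations that stabilise $\ket{\phi^+}$. Once that ansatz is in hand, the only real computation is the spectrum of $J-I$, which is immediate from the rank-one structure of $J$. Notably, the resulting value coincides with the $1\!\to\!1$ induced-norm bound from Theorem~\ref{thm_isolated}, so a side-comment worth including is that allowing the experimenter to entangle with an ancilla does not enlarge the achievable witness value.
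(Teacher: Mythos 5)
Your proposal is correct, and it takes a genuinely different route from the paper. The paper proves Theorem~\ref{thm_diamond} by brute force: it writes the ancilla-extended trace distance as a constrained bilinear optimisation over the coefficients $\psi_{i\alpha}$ of the input state and $\phi^a_{i\alpha}$ of the measurement, runs the method of Lagrange multipliers twice (first to show the optimal $M''$ is a rank-one projector with $|\phi^1_{m\beta}|=|\psi_{m\beta}|$, then to show the optimal magnitudes satisfy $\sum_\beta|\psi_{i\beta}|^2=1/d$ for each $i$), and reads off the value $1-1/d$. Your argument replaces all of that with two short, standard quantum-information moves: the upper bound follows from the mixed-unitary (clock-operator) representation $\Gamma(\rho)=\tfrac{1}{d}\sum_{k=0}^{d-1}Z^k\rho Z^{-k}$, the identity $\mathcal{E}_\mathbb{I}-\Gamma=\tfrac{d-1}{d}\mathcal{E}_\mathbb{I}-\tfrac{1}{d}\sum_{k=1}^{d-1}\mathcal{U}_{Z^k}$, and the triangle inequality (each unitary conjugation having unit diamond norm for any ancilla dimension, so no appeal to stability of the diamond norm is even needed for this direction); the lower bound is the explicit spectrum of $\tfrac{1}{d}(J-I)$ on $\mathrm{span}\{\ket{ii}\}$. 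Both calculations check out, and the matching of the two bounds immediately yields the claim. What your approach buys is brevity and transparency -- in particular it makes manifest \emph{why} entanglement with an ancilla cannot help, since the triangle-inequality bound already coincides with the unassisted value achieved by $\ket{+}$. What the paper's heavier calculation buys is a characterisation of the full set of optimisers (rank-one projectors onto states with $\sum_\beta|\psi_{i\beta}|^2=1/d$), which it uses to exhibit both $\ket{+}\bra{+}\otimes\rho_E$ and the maximally entangled state as extremal inputs; your argument certifies only the latter directly, though the former follows from the coincidence with Theorem~\ref{thm_isolated} that you note at the end.
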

\begin{proof}
One may find the maximum in the definition above by performing the double optimisation using the method of Lagrange multipliers, see Appendix~\ref{max_non_isolated}.
\end{proof}

This value is achievable by setting $M''=\rho_{SE}(\tau)=|+\rangle\langle+|\otimes \rho_E$, or indeed when $M''=\rho_{SE}(\tau)=|\Psi\rangle\langle \Psi|$ where $|\Psi\rangle=\sum_{i=1}^d|i\rangle|i\rangle/\sqrt{d}$ is a maximally-entangled system-environment state. Note this value is lower than would be possible if one could prepare the maximally coherent state over system and environment and dephase them both: i.e. if $M''=\rho=|+\rangle\langle +|\otimes |+\rangle\langle +|$ and if $\Gamma\otimes \mathcal{E}_{\mathbb{I}}$ became $\Gamma\otimes \Gamma$.

Theorem~\ref{thm_diamond} implies that the use of an ancilla does not help in discriminating $\Gamma$ from $\mathcal{E}_{\mathbb{I}}$ -- whereas in general there exist constructive examples of pairs of channels that may be better discriminated in such a fashion~\cite{KitaevShenVyalyi2002}: by preparing an entangled state of the enlarged system and applying one channel or the other to only part of the composite.

As a concrete example of a non-entangled state with no system coherence that can violate $W^a$, consider 
\begin{align}
\rho_{SE}(\tau)=(1-\epsilon)\mathbb{I}/d^2+\epsilon|\Psi\rangle\langle\Psi|.
\end{align}
This state has a diagonal reduced system state (i.e. no system coherence), and for $\epsilon<1/(d^2-1)$ will be non-entangled~\cite{GurvitsBarnum2002}. By linearity, however, the value of $W^a=\epsilon(1-1/d)\neq 0$ is possible by setting $M''=|\Psi\rangle\langle\Psi|$.

\subsubsection{Born approximation}
In order to witness properties of the system alone, we may consider the case where the system-environment state \emph{does} factorize at $\tau$. Such a situation is assured if an assumption known as the Born approximation (BA)~\footnote{This definition is commonplace in the open quantum systems community, and is not to be confused with distinct definitions of the Born approximation in use in scattering theory. It has a similar meaning to the Born-Oppenheimer assumption, well known to physical chemists and molecular physicists as the assumption that a molecular wavefunction can be broken into a product of electronic (`system') and nuclear (`environment') wavefunctions.}
\begin{align}
\rho_{SE}(t)=\rho_S(t)\otimes |e_0\rangle\langle e_0|\qquad \forall t,
\label{Born}
\end{align}
holds. The BA implies there are no correlations between system and environment at $\tau,$ from Eq. (\ref{correlation_matrix}) $\chi_{SE}(\tau)=0$. The BA therefore restores the inference that $W^a\neq0$ implies $ \rho_S\not\in\mathcal{I}$, $\mathcal{I}$ being the set of diagonal states. $W^a$ is thus an unambiguous witness of system coherence even for non-isolated systems, as long as the BA holds. In fact, this is true as long as $\rho_{SE}$ is in \emph{some} product state at $\tau$, with the environment not necessarily in its equilibrium state. In the next section we shall see that the full weight of the BA is necessary for $W^b$ to have such a property. %The BA further implies that the environment is in a stationary state, which is not strictly necessary for $W^a$ to witness coherence only, but will be important in the next section for $W^b$ to have such a property.

The BA may be accurate in some but not all physical situations. It is commonly employed at an intermediate step in the derivation of Master equations such as the Redfield and Quantum-Optical Master equations, routinely used to model the reduced dynamics of coupled electron-phonon~\cite{ValkunasAbramaviciusMancal2013} and atom-photon systems~\cite{BreuerPetruccione2007}. It is usually justified on the grounds of weak system-environment coupling~\cite{Haake1969,BreuerPetruccione2007}, sometimes along with appeal to the relative `largeness' of the environment~\cite{Schlosshauer2007}. It may also be motivated by the idea that the environment relaxation is sufficiently fast compared to the system dynamics, such that excitations in the environment may be neglected~\cite{KokLovett2010}. 

In reality of course, assumptions such as the BA will never be exact. Nevertheless we may quantify the accuracy of the approximation and use the quantification to temper the conclusions about coherence which may be drawn from nonzero witness values. 
\begin{thm}[]
Let $||\rho_{SE}-\textrm{\normalfont tr}_E(\rho_{SE})\otimes \textrm{\normalfont tr}_S(\rho_{SE})||_{\textrm{\normalfont tr}}=||\chi_{SE}||_{\textrm{\normalfont tr}}$ be a measure of the distance of a given system-environment state $\rho_{SE}$ to the set of product states. Then 
\begin{align}
R(\rho_S)\geq 2|W^a|-2||\chi_{SE}||_{\textrm{\normalfont tr}}.
\end{align}
\label{thm_finiteBA}
\end{thm}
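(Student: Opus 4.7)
The plan is to start from the explicit decomposition in Eq.~\eqref{Wa_contribs}, which splits $W^a$ into a ``system-only'' contribution driven by $\rho_S(\tau)-\Gamma(\rho_S(\tau))$ and a ``correlation'' contribution driven by $\chi_{SE}(\tau)$. By the triangle inequality,
\begin{align}
|W^a|\leq |\mathrm{tr}_S[M''(\rho_S-\Gamma(\rho_S))]|+|\mathrm{tr}[M''_{SE}\,((\mathcal{E}_\mathbb{I}-\Gamma)\otimes\mathcal{E}_\mathbb{I})(\chi_{SE})]|,
\end{align}
where $M''_{SE}=U^\dagger(T,\tau)(M\otimes\mathbb{I})U(T,\tau)$ and $M''$ is its partial trace against $\rho_E(\tau)$ as in Eq.~\eqref{Wa_contribs}. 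The goal is to show that the first term is at most $R(\rho_S)/2$ and the second is at most $\|\chi_{SE}\|_{\mathrm{tr}}$, which immediately rearranges into the claimed inequality.

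For the system term, I would rely on the same observation that underlies Eq.~\eqref{measure}: since $M$ is a POVM element we have $0\le M''\le \mathbb{I}$, and $\rho_S-\Gamma(\rho_S)$ is a \emph{traceless} Hermitian operator. Decomposing this operator into its positive and negative parts $X_+-X_-$ with $\mathrm{tr}(X_\pm)=\|\rho_S-\Gamma(\rho_S)\|_{\mathrm{tr}}/2=R(\rho_S)/2$, and bounding $0\le\mathrm{tr}(M''X_\pm)\le\mathrm{tr}(X_\pm)$, one obtains $|\mathrm{tr}_S[M''(\rho_S-\Gamma(\rho_S))]|\le R(\rho_S)/2$. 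This step is essentially the reasoning already used for Theorem~\ref{thm_isolated}, so it carries over verbatim.

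For the correlation term, I would apply the very same positive-negative-part argument, noting that $Y:=((\mathcal{E}_\mathbb{I}-\Gamma)\otimes\mathcal{E}_\mathbb{I})(\chi_{SE})$ is also traceless Hermitian (because $\chi_{SE}$ is traceless and $\Gamma\otimes\mathcal{E}_\mathbb{I}$ is trace preserving), and $0\le M''_{SE}\le\mathbb{I}$. This yields $|\mathrm{tr}[M''_{SE}Y]|\le\|Y\|_{\mathrm{tr}}/2$. The final ingredient is to bound $\|Y\|_{\mathrm{tr}}$: by the triangle inequality on the trace norm and the fact that $\Gamma\otimes\mathcal{E}_\mathbb{I}$ is CPTP and therefore contractive in trace norm,
\begin{align}
\|Y\|_{\mathrm{tr}}\le\|\chi_{SE}\|_{\mathrm{tr}}+\|(\Gamma\otimes\mathcal{E}_\mathbb{I})(\chi_{SE})\|_{\mathrm{tr}}\le 2\|\chi_{SE}\|_{\mathrm{tr}},
\end{align}
which gives $|\mathrm{tr}[M''_{SE}Y]|\le\|\chi_{SE}\|_{\mathrm{tr}}$. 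Substituting both bounds back yields $|W^a|\le R(\rho_S)/2+\|\chi_{SE}\|_{\mathrm{tr}}$, equivalent to the statement of Theorem~\ref{thm_finiteBA}.

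The only real subtlety, rather than an obstacle, is verifying that $M''$ (the effective system measurement obtained after propagating $M$ backward and partially tracing against $\rho_E(\tau)$) is indeed a legitimate POVM element, so that the Hölder-type bound for traceless Hermitian operators applies with the constant $1/2$; this follows because $0\le M\otimes\mathbb{I}\le\mathbb{I}$ is preserved by unitary conjugation and by partial trace against any state. Everything else is a straightforward concatenation of the triangle inequality and standard trace-norm contractivity of CPTP maps, so no further heavy machinery is required.
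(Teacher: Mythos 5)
Your proposal is correct, and it reaches the same intermediate bound as the paper, namely $|W^a|\leq R(\rho_S)/2+||\chi_{SE}||_{\textrm{tr}}$, starting from the same decomposition~\eqref{Wa_contribs}; the two arguments differ only in how the correlation term is controlled. The paper dualises the channels onto the measurement, setting $A=\mathcal{U}^\dagger_{T,\tau}(M\otimes\mathbb{I})$ and $B=(\Gamma\otimes\mathcal{E}_{\mathbb{I}})(A)$, and invokes von Neumann's trace inequality together with a spectral-norm estimate $||A-B||_2\leq ||A||_2\leq 1$ to get $|\textrm{tr}[(A-B)\chi_{SE}]|\leq||\chi_{SE}||_{\textrm{tr}}$. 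You stay in the Schr\"odinger picture: $Y=((\mathcal{E}_{\mathbb{I}}-\Gamma)\otimes\mathcal{E}_{\mathbb{I}})(\chi_{SE})$ is traceless Hermitian, so the same H\"older-type lemma you use for the first term gives $|\textrm{tr}(M''_{SE}Y)|\leq||Y||_{\textrm{tr}}/2$, and the triangle inequality plus trace-norm contractivity of the positive trace-preserving map $\Gamma\otimes\mathcal{E}_{\mathbb{I}}$ gives $||Y||_{\textrm{tr}}\leq2||\chi_{SE}||_{\textrm{tr}}$. Your route is the more economical one: it reuses a single elementary lemma (the $\tfrac{1}{2}||X||_{\textrm{tr}}$ bound for traceless Hermitian $X$ against an effect $0\preceq M\preceq\mathbb{I}$) for both terms, and it sidesteps the singular-value manipulations in the paper, whose written chain $\sigma_1(A-B)\leq|\lambda_1(A)+\lambda_1(-B)|\leq|\lambda_1(A)|$ only controls the largest eigenvalue of $A-B$ and strictly also needs $-B\preceq A-B$ together with $||B||_2\leq||A||_2$ to handle the most negative one. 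Your parenthetical care in checking that $M''$ is the partial trace of $U^\dagger(T,\tau)(M\otimes\mathbb{I})U(T,\tau)$ weighted by $\rho_E(\tau)$ (so that $0\preceq M''\preceq\mathbb{I}$) is also the correct reading of~\eqref{Wa_contribs}.
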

\begin{proof}
Maximising the first term of Eq.~\eqref{Wa_contribs} with respect to the measurement $M''$, we have
\begin{align}
|W^a|\leq& ||\rho_S-\Gamma(\rho_S)||/2\nonumber\\
&+|\textrm{tr}[(M\otimes\mathbb{I})\mathcal{U}_{T,\tau}\circ(\mathcal{E}_{\mathbb{I}}\otimes\mathcal{E}_{\mathbb{I}}-\Gamma\otimes\mathcal{E}_{\mathbb{I}})(\chi_{SE}(\tau))]|.
\end{align}
Next, use von Neumann's trace inequality~\cite{Mirsky1975} to bound the second term:
\begin{align}
|\textrm{tr}\left((A-B)\chi_{SE}\right)|&\leq\sum_i \sigma_i(A-B)\sigma_i(\chi_{SE})\nonumber\\
&\leq \sigma_1(A-B)\sum_i\sigma_i(\chi_{SE})\nonumber\\
 &\leq||A-B||_2||\chi_{SE}||_{\textrm{tr}}\nonumber\\
 &\leq ||A||_2||\chi_{SE}||_{\textrm{tr}}\nonumber\\
 &\leq||\chi_{SE}||_{\textrm{tr}}
\end{align}
for the positive semidefinite matrices $A=\mathcal{U}^\dagger_{T,\tau}(M\otimes\mathbb{I})$ and $B=(\Gamma\otimes\mathcal{E}_{\mathbb{I}})(A)$. The notation $\sigma_i(\cdot)$ and $\lambda_i(\cdot)$ are used for singular and eigenvalues, respectively, ordered from largest to smallest. $||\cdot||_2$ is the spectral norm or largest singular value. For Hermitian matrices, the singular values are just the absolute values of the eigenvalues; the maximum singular value of $A-B$ is therefore no larger than the maximum eigenvalue of $A$ plus the maximum value of $-B$ (which is at most zero by the positive semidefiniteness of $B$). Explicitly:
\begin{align}
||A-B||_2=&\sigma_1(A-B)=|\lambda_1(A+(-B))|\nonumber\\
&\leq|\lambda_1(A)+\lambda_1(-B)|\leq |\lambda_1(A)|=||A||_2.
\end{align}
%with respective singular values $\alpha_i$ and $\beta_i$ arranged in descending order. $\gamma_i$ are the singular values of $\chi_{SE}$. 
Since $0\preceq M\otimes\mathbb{I}\preceq \mathbb{I}\otimes\mathbb{I}$ is a positive operator, and $\mathcal{U}_{T,\tau}$ is unitary transformation, $A$ has the same singular values as $M$, bounded from above by $1$. 
Substituting the bound and rearranging gives the desired result.
\end{proof}
Theorem~\ref{thm_finiteBA} enables unambiguous inference of the quantum coherence in the reduced state of the system, using $W^a$ and given an upper bound on $||\chi_{SE}||$: that is, given a quantification of the departure of the system-environment state from a product at $t=\tau$. 

Our discussion now moves on to the witness constructed in part from the second pair of interruption operations, which are considered slow compared to the relaxation of the environment. 

\subsection{$W^b$: Environment reset only during classicalisation} 
\label{subsec_Wb}
Consider the witness $W^b$ (defined in Eq.~\eqref{Wbdef} and Fig~\ref{fig1}b) that compares one interruption from each class: i.e. doing nothing $(\RomanNumeralCaps{1})$ versus classicalising the system piecewise and simultaneously resetting the environment ($\RomanNumeralCaps{4}, $ achieved on a timescale that is slow with respect to the typical environment timescales).  When the BA holds, the environment remains in $|e_0\rangle$ throughout all experiments, and $\RomanNumeralCaps{4}$ has the same effect as $\RomanNumeralCaps{2}$ interrupting the system only. 
We will now show that $W^b\neq0$ not only implies $\rho\not\in\mathcal{I}$ under the BA, but also under weaker assumptions, since we only require the environment to be in its equilibrium state at $\tau$~\cite{LiLambertChen2012} and not for all times. In fact it may even be in a distinct environment state that delivers the equivalent CP map to the system. 

Firstly, let us define
$ \mathcal{E}_{\textrm{prepare}}^{\RomanNumeralCaps{4}}$ and $\mathcal{E}_{\textrm{measure}}^{\RomanNumeralCaps{4}}$ (super-operating on $\mathscr{H}_S$) by their respective Kraus operators $K_i$ in $\mathcal{E}(\rho)=\sum_i K_i \rho K^\dagger_i$. $\mathcal{E}^\dagger$ is the dual channel, in the sense of having Kraus operators $K_i^\dagger$. Using Eq.~\eqref{interruption_equations} and the useful formula $\rho_{SE}=\rho(0)\otimes|e_0\rangle\langle e_0|=(\mathbb{I}\otimes |e_0\rangle)\rho(0)(\mathbb{I}\otimes \langle e_0|$)~\cite{pse276158}, we have

\ben
\rho_S^{\RomanNumeralCaps{4}}(T)&=&\mathcal{E}_{\textrm{measure}}^{\RomanNumeralCaps{4}}\circ\Gamma\circ\mathcal{E}_{\textrm{prepare}}^{\RomanNumeralCaps{4}}(\rho_S(0))
\een
implying
\ben
 K^{{\RomanNumeralCaps{4}},\textrm{prepare}}_i &=&\langle e_i|U(\tau,0)|e_0\rangle \nonumber\\
 K^{{\RomanNumeralCaps{4}},\textrm{measure}}_i&=& \langle e_i | U(T,\tau) | e_0\rangle. 
 \label{4mapdef}
\een
Here, $\ket{e_i}$ constitute a complete basis for $\mathscr{H}_E$.
We also choose once more to write the joint state at $\tau$ using the correlation matrix as in Eq.~\eqref{correlation_matrix}. Then the reduced environment state defines an alternative measurement CP map acting on $\mathscr{H}_S$:
\begin{align}
\mathcal{E}_{\textrm{measure}}^{\RomanNumeralCaps{1}}(\rho_S)=\textrm{tr}_E(U(T,t)[\rho_S(\tau)\otimes\rho_E(\tau)]U(T,t)).
 \label{1mapdef}
\end{align}
Using these definitions and Eqs.~\eqref{interruption_equations} and \eqref{Wbdef}, we have 
\begin{align}
W^b=& \textrm{tr}_S(M\textrm{tr}_E(U(T,\tau)\chi_{SE}U^\dagger(T,\tau))\nonumber\\
&+\textrm{tr}_S(\mathcal{E}^{\RomanNumeralCaps{1}\dagger}_{\textrm{measure}}(M)\rho_S(\tau))\nonumber\\
&-\textrm{tr}_S(\mathcal{E}^{\RomanNumeralCaps{4}\dagger}_{\textrm{measure}}(M)\Gamma(\rho_S(\tau))).
\label{Wbcontribs}
\end{align}
It is clear there are three (not mutually-exclusive) ways to have a nonzero witness value. It is necessary to have one or more of: 
\\(i) system coherence $\rho(\tau)\neq\Gamma(\rho(\tau))$,
\\(ii) system-environment correlation $\chi_{SE}\neq0$, including those \emph{inside} of $\mathcal{IQ}$, or 
\\(iii) Different CP measurement maps $\mathcal{E}_{\textrm{measure}}^{\RomanNumeralCaps{1}}\neq\mathcal{E}_{\textrm{measure}}^{\RomanNumeralCaps{4}}$. The last possibility includes the case of non-negligible excitations in the environment. 

Clearly, the BA will ensure that $W^b$ reports only on system coherence, since it forces a product structure $\chi_{SE}=0$ \emph{and} the identity of CP measurement maps $\mathcal{E}_{\textrm{measure}}^{\RomanNumeralCaps{1}}=\mathcal{E}_{\textrm{measure}}^{\RomanNumeralCaps{4}}$-- see Fig.~\ref{fig2}b. In this sense, the BA unifies $W^a,W^b$ and $W^{\textrm{isolated}}$.  

%$M'=\mathcal{E}_{\textrm{measure}}^{\RomanNumeralCaps{4}\dagger}(M)$ is a modified measurement operator. 

Li \emph{et al.} highlight a particularly worrisome failure of the BA~\cite{LiLambertChen2012}, stating that the system may ultimately have no coherence but that classical correlations between system and bath can lead to $W^b \neq 0$. This is the second of the three possibilities above. As an example, consider $U(\tau,0)=\sigma_x\otimes\sigma_x$ or $U(\tau,0)=\mathcal{E}_{\mathbb{I}}\otimes\mathcal{E}_{\mathbb{I}}$, with 50\% chance of each: a probabilistic, simultaneous excitation in system and in the environment, leading to a (potentially) only classically-correlated state, and therefore to a possibly nonzero $W^b$. We identify an additional, distinct but equally troubling scenario originating from the third point above: a classical model without any correlations can trigger a false positive. The system and environment may remain in a product but the two measurement CP maps may differ.

A specific example consists of two bits as in the right panel of Fig.~\ref{fig2}b. The environment bit undergoes a simple flip before the interruption and the system bit undergoes a conditional flip (controlled on the environment) after the interruption. So $U(\tau,0)=\mathbb{I}\otimes \sigma_x$ and $U(T,\tau) = \mathbb{I}\otimes |0\rangle\langle 0| + \sigma_x\otimes |1\rangle\langle 1|$. This is a classically controlled, conditional unitary evolution. Nevertheless, when $M=|0\rangle\langle0|$ it results in $W^b$ taking the maximum algebraic value of $1$, as is easily verified. The reason is simply that the interruption caused the environment to reset, which then reset the subsequent CP map delivered to the system and therefore reset the effective measurement operator. If we were to call the state of the environment a `hidden variable', then our model would closely resemble Montina's time-correlated noise model of a qubit~\cite{Montina2012}. To decide whether our example dynamics is Markovian we would first need to fix the definition of Markovianity -- some authors~\cite{Montina2012, SmirneEgloffDiaz2018} take time-inhomogenous evolutions, which include an explicit dependence on time such as our example here, to be non-Markovian. The analysis of Ref.~\cite{PollockRodriguez-RosarioFrauenheim2018} would also class it as non-Markovian, given that the state of the system at $T$ would depend on the choice of interruption (or `control') operation at $\tau$ even if a causal break were introduced just afterwards.

Interpreting the third possibility as a loophole, it may be narrowed by taking $T-\tau$ very small. Then, $\mathcal{U}(T,\tau)\approx\mathcal{E}_{\mathbb{I}}$, meaning $\mathcal{E}_{\textrm{measure}}^{\RomanNumeralCaps{1}}\approx\mathcal{E}_{\textrm{measure}}^{\RomanNumeralCaps{4}}\approx\mathcal{E}_{\mathbb{I}}$. In such a case, $M'\approx M$ meaning $W^b$ is suppressed (regardless of $\rho_{SE}(\tau)$) if $M$ projects onto a diagonal state.

These examples are subsumed by the following theorem, which enables unambiguous inference of the quantum coherence in the reduced state of the system, using $W^b$ and given an upper bound on $||\chi_{SE}||_{\textrm{tr}}$ \emph{and} an upper bound on the distance between $\mathcal{E}^{\RomanNumeralCaps{1}\dagger}_{\textrm{measure}}$ and $\mathcal{E}^{\RomanNumeralCaps{4}\dagger}_{\textrm{measure}}$. That is, given a quantification of the departure of the system-environment state at $t=\tau$ from the BA class defined in~\eqref{Born}.
\begin{prop}
\label{prop1}
Let $|||\mathcal{E}_{\textrm{\normalfont measure}}^{\RomanNumeralCaps{1}}-\mathcal{E}_{\textrm{\normalfont measure}}^{\RomanNumeralCaps{4}}|||_{\textrm{\normalfont tr}}:=\max_\rho||(\mathcal{E}_{\textrm{\normalfont measure}}^{\RomanNumeralCaps{1}}-\mathcal{E}_{\textrm{\normalfont measure}}^{\RomanNumeralCaps{4}})\rho||_{\textrm{\normalfont tr}}$ 
be the induced (superoperator) trace norm distance between the two measurement maps in $W^b$.
This distance is bounded by the trace distance between the reduced and thermal equilibrium states of the enrvironment:
\begin{align}
|||\mathcal{E}_{\textrm{\normalfont measure}}^{\RomanNumeralCaps{1}}-\mathcal{E}_{\textrm{\normalfont measure}}^{\RomanNumeralCaps{4}}|||_{\textrm{\normalfont tr}}\leq ||\textrm{\normalfont tr}_S(\rho_{SE})-\ket{e_0}\bra{e_0}||_{\textrm{\normalfont tr}}.
\end{align}
%: $|||\cdot|||_{\textrm{tr}}$ denotes the induced trace norm for superoperators.
\end{prop}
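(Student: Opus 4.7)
The strategy is to exhibit the two measurement CP maps as Stinespring dilations built on the same unitary $U(T,\tau)$ but with different environment ancilla states, so that their difference reduces to a partial trace of a conjugated product operator. Using the definitions in Eqs.~\eqref{4mapdef} and~\eqref{1mapdef}, one can rewrite
\begin{align}
\mathcal{E}^{\RomanNumeralCaps{1}}_{\textrm{measure}}(\rho) - \mathcal{E}^{\RomanNumeralCaps{4}}_{\textrm{measure}}(\rho) = \textrm{tr}_E\!\left(U(T,\tau)\bigl[\rho\otimes\Delta\bigr]U^\dagger(T,\tau)\right),
\end{align}
with $\Delta := \textrm{tr}_S(\rho_{SE}) - \ket{e_0}\bra{e_0}$. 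This packaging is the main conceptual step; after this, the bound follows from standard properties of the trace norm.

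From there I would apply three successive norm inequalities. First, the partial trace is a CPTP map and so is contractive in the trace norm: $||\textrm{tr}_E(X)||_{\textrm{tr}} \leq ||X||_{\textrm{tr}}$. Second, conjugation by a unitary on the joint Hilbert space preserves singular values, hence $||U(T,\tau) X U^\dagger(T,\tau)||_{\textrm{tr}} = ||X||_{\textrm{tr}}$. Third, trace norm is multiplicative on tensor products, $||\rho\otimes\Delta||_{\textrm{tr}} = ||\rho||_{\textrm{tr}}\,||\Delta||_{\textrm{tr}}$, since the singular values of a Kronecker product are the products of singular values of the factors.

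Chaining these three facts yields
\begin{align}
||(\mathcal{E}^{\RomanNumeralCaps{1}}_{\textrm{measure}}-\mathcal{E}^{\RomanNumeralCaps{4}}_{\textrm{measure}})\rho||_{\textrm{tr}} \leq ||\rho||_{\textrm{tr}}\,||\Delta||_{\textrm{tr}}.
\end{align}
Taking the maximum over density operators $\rho$ on $\mathscr{H}_S$ (for which $||\rho||_{\textrm{tr}}=1$) delivers exactly the claimed inequality. The argument is thus essentially a transportation bound: any discrepancy between the two measurement maps is controlled by how far the actual reduced environment state at $\tau$ has drifted from the fiducial ground state to which interruption $\RomanNumeralCaps{4}$ resets.

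The only place that demands any care is recognising that the optimisation in the induced norm $|||\cdot|||_{\textrm{tr}}$ can be restricted to normalised states (the maps are linear and it suffices to consider trace-one inputs), since otherwise the factor $||\rho||_{\textrm{tr}}$ would not be bounded by unity. There is no genuine obstacle: the result is essentially the statement that Stinespring dilations depend Lipschitz-continuously on the ancilla state, with Lipschitz constant one.
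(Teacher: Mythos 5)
Your proof is correct and follows essentially the same route as the paper's: both hinge on packaging the difference of the two measurement maps as $\textrm{tr}_E\left(U(T,\tau)\left[\rho\otimes\left(\textrm{tr}_S(\rho_{SE})-\ket{e_0}\bra{e_0}\right)\right]U^\dagger(T,\tau)\right)$ and then invoking multiplicativity of the trace norm over tensor products together with the unit trace norm of density operators. The only cosmetic difference is that you bound the partial trace and the unitary conjugation step by step (contractivity and unitary invariance), whereas the paper absorbs these into a single maximisation over the measurement operator and the joint unitary.
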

\begin{proof}
Using the definitions in Eqs.~\eqref{4mapdef} and~\eqref{1mapdef}, we have:
\begin{align}
\max_{\mathcal{U}_{T,\tau}}|||\mathcal{E}_{\textrm{measure}}^{\RomanNumeralCaps{1}}-\mathcal{E}_{\textrm{measure}}^{\RomanNumeralCaps{4}}|||_{\textrm{tr}}=\nonumber\\
\max_{\mathcal{U}_{T,\tau},M,\rho_S}\textrm{tr}[(M\otimes\mathbb{I})\mathcal{U}_{T,\tau}(\rho_S\otimes[\textrm{tr}_S(\rho_{SE})-|e_0\rangle\langle e_0|])]\nonumber\\
=\max_{\rho_S}||\rho_S\otimes[\textrm{tr}_S(\rho_{SE})-|e_0\rangle\langle e_0|]||_{\textrm{tr}}\nonumber\\
=\max_{\rho_S}||\rho_S||_{\textrm{tr}}||\textrm{tr}_S(\rho_{SE})-|e_0\rangle\langle e_0|||_{\textrm{tr}}\nonumber\\
=||\textrm{tr}_S(\rho_{SE})-|e_0\rangle\langle e_0|||_{\textrm{tr}},
\end{align}
using the multiplicativity of the trace norm with respect to tensor products, and the unit trace-norm of density matrices.
\end{proof}
\begin{thm}[]
As above, let $||\rho_{SE}-\textrm{\normalfont tr}_E(\rho_{SE})\otimes \textrm{\normalfont tr}_S(\rho_{SE})||_{\textrm{tr}}=||\chi_{SE}||_{\textrm{tr}}$ be a measure of the distance of a given system-environment state $\rho_{SE}$ to the set of product states. Then 
\begin{align}
R(\rho_S)\geq 2|W^b|-2||\chi_{SE}||_{\normalfont\textrm{tr}}-2||\textrm{\normalfont tr}_S(\rho_{SE})-\ket{e_0}\bra{e_0}||_{\textrm{\normalfont tr}}.
\end{align}
\label{thm_finiteBA2}
\end{thm}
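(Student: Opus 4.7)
The plan is to mirror the strategy of Theorem~\ref{thm_finiteBA} but working from the three-term decomposition of $W^b$ furnished by Eq.~\eqref{Wbcontribs}, and then to absorb the novel measurement-map discrepancy term via Proposition~\ref{prop1}. Writing $W^b = T_1 + T_2 + T_3$ with $T_1$ the $\chi_{SE}$ contribution, $T_2$ the $\mathcal{E}^{\RomanNumeralCaps{1}\dagger}_{\textrm{measure}}$-$\rho_S$ contribution, and $T_3$ the $\mathcal{E}^{\RomanNumeralCaps{4}\dagger}_{\textrm{measure}}$-$\Gamma(\rho_S)$ contribution, the first move is to add and subtract $\textrm{tr}_S\bigl(\mathcal{E}^{\RomanNumeralCaps{4}\dagger}_{\textrm{measure}}(M)\rho_S(\tau)\bigr)$. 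This splits $T_2+T_3$ into (a) a coherence piece $\textrm{tr}_S\bigl(\mathcal{E}^{\RomanNumeralCaps{4}\dagger}_{\textrm{measure}}(M)[\rho_S(\tau)-\Gamma(\rho_S(\tau))]\bigr)$ and (b) a measurement-map discrepancy piece $\textrm{tr}_S\bigl([\mathcal{E}^{\RomanNumeralCaps{1}\dagger}_{\textrm{measure}}-\mathcal{E}^{\RomanNumeralCaps{4}\dagger}_{\textrm{measure}}](M)\rho_S(\tau)\bigr)$. Then apply the triangle inequality to $|W^b|\leq|T_1|+|(\text{a})|+|(\text{b})|$.

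For $T_1$, I would recycle the argument already made in the proof of Theorem~\ref{thm_finiteBA}: rewrite it via trace cyclicity as $\textrm{tr}[U^\dagger(T,\tau)(M\otimes\mathbb{I})U(T,\tau)\,\chi_{SE}]$ and apply von Neumann's trace inequality together with $\|U^\dagger(M\otimes\mathbb{I})U\|_2\leq\|M\|_2\leq 1$, yielding $|T_1|\leq\|\chi_{SE}\|_{\textrm{tr}}$. For piece (a), the key observation is that $M':=\mathcal{E}^{\RomanNumeralCaps{4}\dagger}_{\textrm{measure}}(M)$ is a bona fide POVM element: because $\mathcal{E}^{\RomanNumeralCaps{4}}_{\textrm{measure}}$ is a trace-preserving CP map (an isometric dilation followed by a partial trace, per Eq.~\eqref{4mapdef}), its dual is unital and positive, so $0\preceq M'\preceq\mathbb{I}$. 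Eq.~\eqref{measure} then gives $|(\text{a})|\leq R(\rho_S(\tau))/2$ directly.

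Piece (b) is where Proposition~\ref{prop1} enters. Move $M$ back onto the states by duality so that the term becomes $\textrm{tr}_S\bigl(M\,[\mathcal{E}^{\RomanNumeralCaps{1}}_{\textrm{measure}}-\mathcal{E}^{\RomanNumeralCaps{4}}_{\textrm{measure}}](\rho_S(\tau))\bigr)$. Because $\mathcal{E}^{\RomanNumeralCaps{1}}_{\textrm{measure}}$ and $\mathcal{E}^{\RomanNumeralCaps{4}}_{\textrm{measure}}$ are each trace-preserving, their difference sends $\rho_S(\tau)$ to a traceless Hermitian operator, so a Hölder/von Neumann bound with $\|M\|_2\leq1$ gives $|(\text{b})|\leq\|[\mathcal{E}^{\RomanNumeralCaps{1}}_{\textrm{measure}}-\mathcal{E}^{\RomanNumeralCaps{4}}_{\textrm{measure}}](\rho_S(\tau))\|_{\textrm{tr}}\leq|||\mathcal{E}^{\RomanNumeralCaps{1}}_{\textrm{measure}}-\mathcal{E}^{\RomanNumeralCaps{4}}_{\textrm{measure}}|||_{\textrm{tr}}$. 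Proposition~\ref{prop1} then furnishes $|(\text{b})|\leq\|\textrm{tr}_S(\rho_{SE})-\ket{e_0}\bra{e_0}\|_{\textrm{tr}}$. Assembling the three bounds and rearranging delivers the stated inequality.

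The main conceptual obstacle is piece (b), because $W^b$ is sensitive to a failure mode absent from $W^a$, namely the dependence of the effective post-$\tau$ CP map on whether the environment has been allowed to relax. The crucial step is recognising that this dependence is entirely controlled by the reduced environment state at $\tau$ and can be cleanly isolated by the add-and-subtract manoeuvre, after which Proposition~\ref{prop1} does the heavy lifting. A minor technical care-point is verifying unitality of $\mathcal{E}^{\RomanNumeralCaps{4}\dagger}_{\textrm{measure}}$ so that $\mathcal{E}^{\RomanNumeralCaps{4}\dagger}_{\textrm{measure}}(M)$ remains a valid effect operator; beyond that the calculation is routine.
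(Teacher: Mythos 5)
Your proposal is correct and follows essentially the same route as the paper's own proof: the same decomposition of $W^b$ via Eq.~\eqref{Wbcontribs}, the same add-and-subtract of the $\mathcal{E}^{\RomanNumeralCaps{4}\dagger}_{\textrm{measure}}$ term, the von Neumann trace-inequality bound on the $\chi_{SE}$ contribution recycled from Theorem~\ref{thm_finiteBA}, and Proposition~\ref{prop1} to control the measurement-map discrepancy. Your explicit verification that $\mathcal{E}^{\RomanNumeralCaps{4}\dagger}_{\textrm{measure}}(M)$ is a valid effect operator is a welcome detail the paper leaves implicit, but it does not change the argument.
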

\begin{proof}
Begin by maximizing the absolute value of~\eqref{Wbcontribs}:
\begin{align}
|W^b|\leq& \max_{M,U(T,\tau)}  \textrm{tr}([M\otimes\mathbb{I}](U(T,\tau)\chi_{SE}U^\dagger(T,\tau))\nonumber\\
&+\max_{M,\mathcal{E}^{\RomanNumeralCaps{4}\dagger}_{\textrm{measure}}}\textrm{tr}_S(\mathcal{E}^{\RomanNumeralCaps{4}\dagger}_{\textrm{measure}}(M)[\rho_S-\Gamma(\rho_S)])\nonumber\\
&+\max_{M,\rho_S}\textrm{tr}_S([\mathcal{E}^{\RomanNumeralCaps{1}\dagger}_{\textrm{measure}}-\mathcal{E}^{\RomanNumeralCaps{4}\dagger}_{\textrm{measure}}](M)(\rho_S).\nonumber \\
\leq& ||\chi_{SE}||_{\textrm{tr}}+R(\rho_S)/2 + |||\mathcal{E}^{\RomanNumeralCaps{1}\dagger}_{\textrm{measure}}-\mathcal{E}^{\RomanNumeralCaps{4}\dagger}_{\textrm{measure}}|||_{\textrm{tr}}.
\end{align}
We used the identity $\mathcal{E}^{\RomanNumeralCaps{1}\dagger}=\mathcal{E}^{\RomanNumeralCaps{1}\dagger}-\mathcal{E}^{\RomanNumeralCaps{4}\dagger}+\mathcal{E}^{\RomanNumeralCaps{4}\dagger}$. Replacing the third term using Proposition~\ref{prop1} and rearranging gives the desired result.
\end{proof}

Since we assume the system and environment to begin in a product, there cannot be any correlations at $\tau$ unless the environment undergoes some nontrivial dynamics. Hence our conclusion that $W^b$ reports on system coherence and/or coupling to a non-stationary environment. 
\subsection{$W^c$: Environment reset during both interruptions}
\label{subsec_Wc}
If it were possible to artificially reset the environment at $\tau$, this would remove the loopholes detailed in the previous section and in the right panel of Fig.\ref{fig2}b. Our fix to the loopholes requires an increase in the number of experiments, and involves replacing the completely uninterrupted experiment $(\RomanNumeralCaps{1})$ with one that reconstructs the full reduced state at $\tau$, namely interruption $(\RomanNumeralCaps{3})$. This removes the need to assume the BA in Eq.~\eqref{Born}, but requires order $d^2$ experiments that are capable of inferring (and re-preparing) $\rho$ in full, including its coherences.

This solution is equivalent to actively intervening into the dynamics of the environment and enforcing validity of the BA -- see Fig.~\ref{fig2}c. Now both interruptions in Eqs.~(\ref{eq:int3},\ref{eq:int4}) reset the environment and we have 
\ben
W^c:=P^{\RomanNumeralCaps{3}}-P^{\RomanNumeralCaps{4}} = \textrm{tr} [M' (\rho(\tau)-\Gamma(\rho(\tau))],
\een
which shows that $W^c$ reports only on the coherence in the reduced state $\rho(\tau)$. It is also clear that the false positive for $W^b$ that we outline in the right panel of Fig~\ref{fig2}b is not a false positive for $W^c$, yielding $W^c=0$.

The downside to this approach is that one has implicitly performed state tomography at $\tau$, and any desired witness or measure may be calculated directly, making the remainder of the protocol superfluous. %Alternative solutions include decoupling the environment in real time. 

\section{Device independence}
\label{sec_device}
Is it necessary to trust $\Gamma$ in order to draw a conclusion about the nonclassicality of a system via these witnesses? This question takes on increased significance given our focus on non-isolated systems where perfect implementations of all operations (including $\Gamma$) may not be available. In particular, in realistic experiments it will almost certainly be the case that 
\begin{align}
W(\rho_c)\neq0
\end{align}
despite the converse being predicted by quantum theory, due to statistical or other types of noise. It would therefore arguably be premature to ascribe nonzero witness values to unambiguous quantumness in the system state. There is a solution to this conundrum, however. 

By running the isolated witness test for the initial state set to each of the classical states $|i\rangle$ in turn (say with appropriate control of $U(\tau,0)$), one can set a baseline for the witness $W_i$ for each of them. Then, allowing for the preparation of a test state, we can test to see if the witness exceeds the range of baseline values. Such an approach was used in~\cite{KneeKakuyanagiYeh2016,WangKneeZhan2017}, and tests if
\begin{align}
W^{\textrm{isolated}}\in [\min(W_i),\max(W_i)].
\label{baseline_interval}
\end{align}
By the linearity of the Born rule, this condition is satisfied if the test state is merely some (arbitrarily weighted) convex mixture of the classical states (i.e. is diagonal). Hence violation is sufficient to infer that the test state is not diagonal without the need to trust the action of $\Gamma$. 

The idea generalises from the isolated case to our new witnesses $W^{a,b,c}$, in the sense that the condition will be satisfied for any arbitrary convex mixture of the joint system-environment states $\rho_{SE}(\tau)_i$ used to define the baseline interval $[\min(W_i),\max(W_i)]$. Then one can rely on the promise that each of these baseline states belonged to a certain convex set (e.g. incoherent-quantum $\mathcal{IQ}$) to conclude that a violation of Eq.~\eqref{baseline_interval} witnesses the non-inclusion of the test state in that set. Our proposal, which extends the isolated prototype Eq.~\eqref{baseline_interval}, therefore trades trust in $\Gamma$ for some level of trust in preparing the baseline states, e.g.  $\{|i\rangle\}_i$, lending it a quasi-device-independent property.

Clearly, one needs to be sure that $W_i$ constitute an exhaustive characterisation of the extreme points of the set -- otherwise violation of Eq.~\eqref{baseline_interval} can be caused by the test state living in an uncharacterised subspace. It could be that the system is higher dimensional than expected, for example, or that the test state is `more pure' than any of the baseline states. In other words, Eq.~\eqref{baseline_interval} is simply testing whether the test state is in the convex hull of the baseline states. Alternative readings of macrorealism known as `eigenstate support' and `supra eigenstate support' given by Maroney and Timpson~\cite{MaroneyTimpson2014} rely on precisely these ideas to maintain a classical view in the face of apparent violations of macrorealist conditions. In these alternative readings of macrorealism, `hidden variables' play a non-trivial role because the true classical (or `ontic') states can no longer be fully described from within the quantum formalism, i.e. merely as diagonal density operators on a space of known dimension.

Building on the idea that Eq.~\eqref{baseline_interval} defines a quasi-device-independent test of eigenstate-mixture macrorealism, generalisations of $\Gamma$ to arbitrary CPTP maps have been considered~\cite{WangKneeZhan2017,MoreiraCunha2018}. Replacing $\Gamma$ with an arbitrary channel $\mathcal{E}$, we have 
\begin{align}
V_{\mathcal{E}}(\rho,M):= \textrm{tr}(M[\rho-\mathcal{E}(\rho)])\leq ||\rho-\mathcal{E}(\rho)||_{\textrm{tr}}.
\end{align}
It is possible to find an $\mathcal{E}$ such that each state in the preferred basis is unaltered but that coherent superpositions are taken to orthogonal states~\cite{WangKneeZhan2017}, thus giving the maximum algebraic violation of~\eqref{baseline_interval}.
These ideas have connections to resource theories of asymmetry~\cite{MarvianSpekkensZanardi2016}.

\section{Conclusion}
\label{sec_conclusion}
We introduced three new witnesses in Eqs.~\eqref{Wadef}, \eqref{Wbdef}, and \eqref{Wcdef}, generalising the No-Signalling-In-Time witness of quantum coherence to apply to non-isolated systems. Prescribing the `best' generalised witness depends on the application. $W^a$ is relatively cheap but requires specific quantum control that may or may not be available. The specific quantum control in question, $\Gamma\otimes\mathcal{E}_{\mathbb{I}}$, is an operation commonly thought to be cheap to perform in the laboratory: It greatly reduces the experimental cost of inferring quantum coherence (with respect to state tomography). $W^a$ is not solely a witness of quantum coherence in the system of interest, but tests for membership of $\mathcal{IQ}$: the `incoherent-quantum' set of joint system-environment states. The latter is arguably the more interesting and meaningful notion of classicality, and the former is recovered as a special case when the system environment state factorizes (ensured when the Born approximation Eq.~\eqref{Born} holds).

 $W^b$ is more expensive, yet still cheaper than tomography, but relies on the validity of additional assumptions (again, ensured by the Born approximation) to rule out violations due to excitations in the environment. $W^c$ does not rely on such an approximation, but is expensive, and one may as well use the implicit state tomogram generated during the protocol to calculate any desired property,  including for example the resource-theoretic measure $R(\rho)$. 
 
 In summary, we have shown:  
 
 \begin{itemize}
 \item There is a link between tests of macrorealism and a certain resource-theoretic coherence monotone $R(\rho)$ (defined in Eq.~\eqref{vulnerability_of_coherence}) through the NSIT protocol and associated witness (Section~\ref{sec_isolated}). 
 \item Admitting the existence of an environment leads to the realisation that a specific class of system-environment correlations (weaker than entanglement and weaker than discord) can trigger $W^a$ when there is no quantum coherence in the system. We proved the maximum violation of the witness is unchanged from the isolated case, and give two explicit examples achieving it in arbitrary dimension (Section~\ref{subsec_Wa}).
 \item Performing NSIT with slow classicalisation has a wider loophole than previously believed: the system can be classical and entirely uncorrelated with a non-stationary environment, and this can trigger violations. We gave an explicit example achieving the maximum algebraic violation (Section~\ref{subsec_Wb}).
 \item Enforcing a stationary environment can alleviate the loophole by enforcing the Born approximation but is as expensive as full quantum state tomography (Section~\ref{subsec_Wc}).
 \end{itemize}

These results should assist the design, execution and interpretation of anticipated tests of quantumness in biological and other non-isolated systems. 
The classicalisation operation $\Gamma$ has a counterpart in the theory of generalised probabilistic theories (GPTs)~\cite{Barrett2007} -- see for example~\cite{ScandoloSalazarKorbicz2018} where the counterpart is named `complete decoherence' -- future work may investigate the validity of the notion of coherence as a resource in GPTs.

\begin{acknowledgments}
G.C.K. was supported by the Royal Commission for the Exhibition of 1851, and thanks Neill Lambert and Benjamin Yadin for helpful discussions. AD and MM are supported by UK EPSRC (EP/K04057X/2). 
\end{acknowledgments}

\begin{appendix}

%%%%%%%%%%%%%%%%%%%
\section{Partial summation method}
%%%%%%%%%%%%%%%%%%%
\label{partial_sum}
Li \emph{et al.} propose a method to potentially reduce the number of experiments required to witness coherence~\cite{LiLambertChen2012}. With a `piecewise' implementation of $\Gamma$ in mind (where the system is measured and conditionally re-prepared in each of the $d$ possible classical states at $\tau$), they write 
\be
|W^{\textrm{isolated}}(\rho,M')|=|p_m(T)-\sum_n p_n(\tau) \Omega_{mn}(T,\tau)|
\ee
where $\Omega_{mn}(T,\tau)$ are the conditional probabilities that the system will be found in state $n$ at $T$ given that it was found in state $m$ at $\tau$. Li \emph{et al.} state that if the terms in the sum are found in separate experiments performed sequentially, such a procedure may be stopped early, `as soon as the witness is violated by this partial summation'. This is not an entirely safe prescription, however, since e.g. for a qubit if $M'=|+\rangle\langle +|,\rho=|1\rangle\langle 1|$ the witness is satisfied (takes a zero value) only when the complete sum is constructed. Therefore stopping early could lead to a false positive. The refined statement from Li \emph{et al.}, that the experiments can be stopped as soon as the terms in the sum together are larger than $p_m(T)$, on the other hand, \emph{is} a safe prescription because the sum is monotonically increasing with the number of terms while $p_m(T)$ is constant. However, some violations of $W^{\textrm{isolated}}(\rho,M')=0$ involve the sum always remaining \emph{below} $p_m(T)$. Take for example $M'=\rho=|+\rangle\langle + |$. In this case there is apparently no cost saving to be had through partial summation. However one may simply use the complementary measurement operator $M'\rightarrow \mathbb{I}-M'$ to define a new witness for the same experiment which \emph{does} have the desirable partial summation property. The prescription we suggest is that if $p_m(T)>\frac{1}{2}$, we may define a smaller $q_m(T):=1-p_m(T)$ from the same experimental data such that the monotonically growing partial summation term (whose summands are now also the complements of the previous values) crosses the new constant value sooner. When the state and measurement are chosen to maximise the value of the witness, the terms in the sum are all equal~\cite{SchildEmary2015} and the order of partial summation does not matter. 

%%%%%%%%%%%%%%%%%%%
\section{Maximum NSIT violation for isolated systems}
\label{max_isolated}
\begin{thm_isolated_repeated}[Isolated case]
The induced trace-norm distance between the identity channel $\mathcal{E}_{\mathbb{I}}$ and the classicalisation channel $\Gamma$ is $1-1/d$ where $d$ is the dimension of the Hilbert space upon which those channels super-operate.
\end{thm_isolated_repeated}
\begin{proof}
The definition of the induced trace norm distance between $\mathcal{E}_{\mathbb{I}}$ and $\Gamma$ channels is 
\begin{align}
&\max_{\rho\succeq0,\textrm{Tr}\rho=1} ||\mathcal{E}_{\mathbb{I}}(\rho)-\Gamma(\rho)||_{\textrm{tr}}/2\nonumber\\&= 
\max_{\rho\succeq0,\textrm{Tr}\rho=1} \max_{M\succeq0} \textrm{Tr}[M(\rho-\Gamma(\rho))]\nonumber\\
&=\max_{\braket{\psi|\psi}=1} \max_{M\succeq0} \textrm{Tr}[M(\ket{\psi}\bra{\psi}-\Gamma(\ket{\psi}\bra{\psi}))]\label{restrict_to_pure}\\
&= \max_{\braket{\psi|\psi}=1} \max_{M^2=M} \textrm{Tr}[M(\ket{\psi}\bra{\psi}-\Gamma(\ket{\psi}\bra{\psi}))]\label{restrict_to_projectors}\\
&= \max_{\braket{\psi|\psi}=1} \max_{\braket{\phi^a|\phi^b}=\delta_{ab}}\sum_{a=1}^{\textrm{rank}(M)}\bra{\phi^a}(\ket{\psi}\bra{\psi}-\Gamma(\ket{\psi}\bra{\psi})\ket{\phi^a}.
\end{align}
Eq.~\eqref{restrict_to_pure} ($\rho\rightarrow|\psi\rangle\langle\psi|$) is justified since the maximum of a linear function over a convex set (i.e. the set of density matrices) is always achieved on the boundary of that set (i.e. on a pure state). In Eq.~\eqref{restrict_to_projectors}, we used the fact that $M$ may be taken as a projective operator~\cite{NielsenChuang2004,Fuchs1996}.
In the last step, we used the spectral theorem to write $M=\sum_a \ket{\phi^a}\bra{\phi^a}$, where the eigenvalues of $M$ are either $0$ or $1$ by its projective property $M^2=M$. 

The Lagrangian of this constrained optimisation problem is
\begin{align}
\mathcal{L}=&\sum_{a=1}^{\textrm{rank}(M)}\bra{\phi^a}(\ket{\psi}\bra{\psi}-\Gamma(\ket{\psi}\bra{\psi}))\ket{\phi^a}\nonumber\\&+\lambda_\psi\left(1-\braket{\psi|\psi}\right)+\sum_{a=1}^{\textrm{rank}(M)}\sum_{b=1}^{\textrm{rank}(M)}\lambda^{a,b}_\phi\left(\delta_{ab}-\braket{\phi^b|\phi^a}\right)\nonumber\\
=&\sum_{a=1}^{\textrm{rank}(M)}\sum_{i}\sum_{j\neq i}\bar{\phi}_i^a\phi_j^a\psi_i\bar{\psi}_j+\lambda_\psi\left(1-\sum_i \psi_i\bar{\psi}_i\right)\nonumber\\&+\sum_{a=1}^{\textrm{rank}(M)}\sum_{b=1}^{\textrm{rank}(M)}\lambda_\phi^{a,b}\left(\delta_{ab}-\sum_i \phi^a_i\bar{\phi}^b_i\right).
\end{align}
Recall that $\Gamma(\rho)=\sum_i \ket{i}\bra{i}\rho\ket{i}\bra{i}$. We expanded all operators in the preferred basis $\{\ket{i}\}_{i=1}^d$, which makes $\ket{\psi}\bra{\psi}-\Gamma(\ket{\psi}\bra{\psi})$ a `hollow' matrix --- meaning that its diagonal entries are zero. Complex conjugates are denoted with a bar, and are treated as independent variables for the purposes of differentiation, i.e. $\partial z/\partial \bar{z}=\partial \bar{z}/\partial z=0$. Setting the derivatives with respect to all parameters and with respect to the Lagrange multipliers $\lambda_\psi,\lambda_\phi^{a,b}$ equal to zero enforces the constraints and yields conditions for optimality of $\psi$ and $\phi$:
\begin{align}
\bar{\psi}_m&=\sum_{a=1}^{\textrm{rank}(M)}\frac{\bar{\phi}_m^a}{\lambda_\psi}\sum_{i\neq m}\phi_i^a\bar{\psi}_i\label{psi_bar}\\
\psi_m&=\frac{\sum_{b=1}^{\textrm{rank}(M)}\lambda_\phi^{a,b}\phi_m^b}{\sum_{i\neq m}\phi_i^a\bar{\psi}_i}.
\label{psi}
\end{align}
Multiplying these conditions together gives
\begin{align}
|\psi_m|^2 = \sum_a^{\textrm{rank}(M)} \sum_b^{\textrm{rank}(M)} \bar{\phi}_m^a\phi_m^b \frac{\lambda_\phi^{a,b}}{\lambda_\psi}\label{prod},
\end{align}
summing over $m$, gives 
\begin{align}
\lambda_\psi = \sum_a^{\textrm{rank}(M)} \lambda_\phi^{a,a},
\end{align}
by the orthonormality relation $\sum_m \bar{\phi}_m^a \phi_m^b=\delta_{ab}$ for $\delta_{ab}$ the Kronecker delta. Then, multiplying Eq.~\eqref{psi} by $\bar{\phi}^c_m\sum_{i\neq m}\phi_i^a \bar{\psi}_i$ and summing over $m$ gives 
\begin{align}
\lambda_\phi^{a,c} = \sum_m \bar{\phi}^c_m\psi_m\sum_{i\neq m}\phi_i^a \bar{\psi}_i,
\label{lam}
\end{align}
using the same orthonormality relation. Substituting Eq.~\eqref{lam} into~\eqref{psi} gives 
\begin{align}
\psi_m = \sum_c^{\textrm{rank}(M)} \sum_m \bar{\phi}_m^c\phi_m^c \psi_m.
\end{align}
Since at least one $\psi_m$ must be non-zero, we may divide by it:
\begin{align}
1 = \sum_c^{\textrm{rank}(M)}\sum_m \bar{\phi}_m^c\phi_m^c=\textrm{rank}(M).
\end{align}
This collapses the sums over $a$ and $b$ in Eqs.~\eqref{prod} and also refines~\eqref{lam} to $\lambda_\psi=\lambda_\phi^{1,1}$, giving
\begin{align}
|\psi_m| = |\phi^1_m|.
\end{align}
Substituting the condition into the objective function (eliminating the measurement which we have shown must be a rank one projector), we have:
\begin{align}
&\max_{\braket{\psi|\psi}=1}\sum_i\sum_{j\neq i }|\psi_i|^2|\psi_j|^2e^{i(\theta_i+\varphi_i-\theta_j-\varphi_j)}\nonumber\\
&=\max_{|\psi_i|}\sum_i|\psi_i|^2(1-|\psi_i|^2)\nonumber\\
&=\max_{|\psi_i|}\left(1-\sum_i|\psi_i|^4\right).
\end{align}
This is a necessary condition for maxima. We wrote the expansion coefficients in polar form $\psi_i=|\psi_i|e^{i\theta_i},\phi_i^a=|\phi_i^a|e^{i\varphi_i}$ and replaced the overall phase of each term in the sum with unity, resulting in an achievable upper bound.  Maximising over $\psi$ again yields $|\psi_i|=1/\sqrt{d}$, giving
\begin{align}
\max_{\rho\succeq0,\textrm{Tr}\rho=1} ||\mathcal{E}_{\mathbb{I}}(\rho)-\Gamma(\rho)||_{\textrm{tr}}/2=1-1/d.
\end{align}
\end{proof}
%%%%%%%%%%%%%%%%%%%
%%%%%%%%%%%%%%%%%%%
\section{Mixed states of the environment}
\label{mixed_env}
In the main text we made a zero temperature assumption for the sake of brevity. In this section we relax that assumption and show that arguments remain essentially unchanged. Let the joint state be 
\begin{align}
\rho_{SE}(t_i)=\rho_S(t_i)\otimes\rho_E^{th.eq.}
\end{align}
for $\rho_E^{th.eq.}=\sum_i p_i |e_i\rangle\langle e_i|$ being the thermal equilibrium state of the environment at finite temperature, expanded here in the energy eigenbasis. $p_k$ are commonly taken to be a Boltzmann distribution. Now the reduced dynamics of the system can be written 
\begin{align}
\rho_S(t_j) &= \textrm{tr}_E(U(t_j,t_i)\rho_{SE}(t_i)U^\dagger(t_j,t_i))\nonumber\\
&=\sum_{ik} (\sqrt{p_i}\langle e_k|U(t_j,t_i)|e_i\rangle)\rho_{S}(t_i) (\sqrt{p_i}\langle e_i| U(t_j,t_i)^\dagger|e_k\rangle)\nonumber\\
&=\sum_{ik} K_{ik}\rho_S(t_i)K_{ik}^\dagger\nonumber\\
&=:\mathcal{E}(\rho_S(t_i))
\end{align}
where $K_{ik}=\sqrt{p_i}\langle e_k|U(t_j,t_i)|e_i\rangle$ are Kraus operators satisfying $\sum_{ik}K_{ik}K^\dagger_{ik}=\mathbb{I}$. Our analysis can then be re-run with this more general definition of the Kraus operators defining a CP map.%%%%%%%%%%%%%%%%%%%
%%%%%%%%%%%%%%%%%%%
\section{Proof of faithfullness of trace distance for incoherent-quantum states}
%%%%%%%%%%%%%%%%%%%
\label{ns_iq}
\begin{thm_faithful_repeated}
 $||\rho_{SE}-(\Gamma\otimes\mathcal{E}_{\mathbb{I}})\rho_{SE}||/2 =  0 $ if and only if $\rho_{SE} \in \mathcal{IQ}$.
\end{thm_faithful_repeated}
\begin{proof}
For the forward direction, apply $(\Gamma\otimes\mathcal{E}_{\mathbb{I}})$ to $\rho$ using the definition of $\mathcal{IQ}$ from Eq.~\eqref{iq}:
\begin{align}
(\Gamma\otimes\mathcal{E}_{\mathbb{I}})\rho_{SE}&=(\Gamma\otimes\mathcal{E}_{\mathbb{I}})\sum_i p_i |i\rangle\langle i | \otimes \rho_E^i\nonumber\\
&=\sum_k\sum_ip_i \braket{k|i}\braket{i|k}\otimes\rho_E^i\nonumber\\
&=\sum_k p_k |k\rangle\langle k | \otimes \rho^k_E=\rho_{SE}.
\end{align}
For the reverse direction, the fact that $||\cdot||_\textrm{tr}$ is a norm implies that $\rho_{SE}=(\Gamma\otimes\mathcal{E}_{\mathbb{I}})\rho_{SE}$. Next, spectrally decompose $\rho_{SE}$
\begin{align}
\rho_{SE}=(\Gamma\otimes\mathcal{E}_{\mathbb{I}})\rho_{SE} = (\Gamma\otimes\mathcal{E}_{\mathbb{I}})\sum_k P_k |\psi^k\rangle\langle\psi^k|,
\end{align}
and write each pure state in the convex combination in Schmidt form~\cite{NielsenChuang2004} $|\psi^k\rangle = \sum_i\lambda_i^k|\phi_i^k\rangle\otimes|\chi^k_i\rangle, \text{where}~ \lambda^k_i\geq0, \sum_i(\lambda^k_i)^2=1, \braket{\phi^k_i|\phi^k_j} = \braket{\chi^k_i|\chi^k_j}=\delta_{ij}$:
\begin{align}
\rho_{SE}&=(\Gamma\otimes\mathcal{E}_{\mathbb{I}})\sum_{kij}P_k\lambda_i^k\lambda_j^k |\phi_i^k\rangle\langle\phi_j^k|\otimes|\chi_i^k\rangle\langle\chi_j^k|\nonumber\\
&=\sum_{lkij}P_k\lambda_i^k\lambda_j^k |l\rangle\langle l|\phi_i^k\rangle\langle\phi_j^k|l\rangle\langle l|\otimes|\chi_i^k\rangle\langle\chi_j^k|\nonumber\\
&= \sum_l p_l |l\rangle\langle l | \otimes \rho_E^l \in \mathcal{IQ},
\end{align}
where we defined 
\begin{align}
\rho_E^l &= \frac{1}{p_l}\sum_{k}p_k\left[\sum_i \lambda_i^k \braket{l|\phi_i^k}|\chi_i^k\rangle\right]\left[\sum_j \lambda_j^k \braket{\phi_j^k|l}\bra{\chi_j^k}\right]\\
p_l&=\textrm{tr}\left(\sum_{k}p_k\left[\sum_i \lambda_i^k \braket{l|\phi_i^k}|\chi_i^k\rangle\right]\left[\sum_j \lambda_j^k \braket{\phi_j^k|l}\bra{\chi_j^k}\right]\right)\nonumber\\
&=\sum_{mkij}P_k\lambda_i^k\lambda_j^k\braket{l|\phi_i^k}\braket{\chi_m^k|\chi_i^k}\braket{\chi_j^k|\chi_m^k}\braket{\phi_j^k|l}\nonumber\\
&=\sum_{km}P_k (\lambda^k_m)^2|\braket{l|\phi_m^k}|^2.
\end{align}
The proof goes through if $\rho_E^l$ are density operators, and if $p_l\geq0$ and form a resolution to unity. From the definitions, $\rho_E^l$ are manifestly positive semidefinite, and we have 
\begin{align}
\sum_l p_l &=\sum_{lmk}P_k\lambda_m^2\langle \phi_j^k|l\rangle\langle l |\phi_j^k\rangle\nonumber\\
&=\sum_{mk}P_k\lambda_m^2\langle \phi_j^k|\phi_j^k\rangle=\sum_kP_k\sum_m\lambda_m^2=1.
\end{align}
%This completes the proof.
\end{proof}

%%%%%%%%%%%%%%%%%%%
\section{Maximum violation for non-isolated case}
\label{max_non_isolated}
%%%%%%%%%%%%%%%%%%%
\begin{thm_diamond_repeated}[Non-isolated case]
The diamond-norm distance between the identity channel $\mathcal{E}_{\mathbb{I}}$ and the classicalisation channel $\Gamma$ is equal to $1-1/d$ where $d$ is the dimension of the Hilbert space upon which those channels super-operate.
\end{thm_diamond_repeated}
\begin{proof}
\begin{align}
||\mathcal{E}_\mathbb{I}-\Gamma||_\diamond/2 :& = \max_{\rho\succeq0,\textrm{Tr}\rho=1} ||\rho-[\Gamma\otimes\mathcal{E}_{\mathbb{I}}](\rho)||_{\textrm{tr}}.
\end{align}
Here $\rho=|\psi\rangle\langle\psi|$ acts on $\mathscr{H}_S\otimes\mathscr{H}_E$, having dimension $d^2$. In principle the maximisation should allow for an environment of arbitrary dimension: however, since the diamond norm is proven to be \emph{stable}, we need not consider environments of dimension greater than that of the system~\cite{Kitaev1997,GilchristLangfordNielsen2005}. We will find it convenient to expand in a system environment basis $|i,\alpha\rangle$ where $i$ enumerates the classical preferred basis of the system (as above) and $\alpha$ enumerates some basis of the environment. Each index runs over $d$ values. Then, $\mathcal{H}=\rho-[\Gamma\otimes\mathcal{E}_{\mathbb{I}}](\rho)$ is `block-hollow': $\mathcal{H}_{i\alpha,j\beta}=\mathcal{H}_{i\alpha,j\beta}(1-\delta_{ij})$. Following the proof of Theorem~~\ref{thm_isolated} (shown above in Appendix~\ref{max_isolated}). Beginning from Eq.~\eqref{dn}.
\begin{align}
||\mathcal{E}_\mathbb{I}-\Gamma||_\diamond/2 &=\max_{\psi_{i\alpha},\phi_{i\alpha}} \sum_{a=1}^{\textrm{rank}(M'')} \sum_{i,\alpha}\sum_{j\neq i,\beta} \bar{\phi}^a_{i\alpha}\phi^a_{j\beta}\psi_{i\alpha}\bar{\psi}_{j\beta},
\end{align}
subject to corresponding orthonormality and normalisation constraints on $\rho=|\psi\rangle\langle\psi|,\,|\psi\rangle=\sum_{i,\alpha}\psi_{i\alpha}|i,\,\alpha\rangle,\,M''=\sum_a^{\textrm{rank}(M'')}|\phi^a\rangle\langle \phi^a|,\, |\phi^a\rangle = \sum_{i,\alpha} \phi_{i\alpha}^a|i,\alpha\rangle$.
The method of Lagrange multipliers yields
\begin{align}
\bar{\psi}_{m\gamma}&=\sum_{a=1}^{\textrm{rank}(M'')}\frac{\bar{\phi}_{m\gamma}^a}{\lambda_\psi}\sum_{i\neq m,\beta}\phi_{i\beta}^a\bar{\psi}_{i\beta}\\
\psi_{m\gamma}^a&=\frac{\sum_b^{\textrm{rank}(M'')} \lambda_\phi^{a,b}\phi_{m\gamma}^b}{\sum_{i\neq m,\alpha }\phi_{i\alpha}\bar{\psi}_{i\alpha}}.\label{psi_ni}
\end{align}
Multiplying the conditions together gives 
\begin{align}
|\psi_{m\beta}|^2=\sum_a^{\textrm{rank}(M'')} \sum_b^{\textrm{rank}(M'')} \frac{\lambda_\phi^{a,b}}{\lambda_\psi}\phi_{m\beta}^b \bar{\phi}_{m\beta}^a.
\label{prod_ni}
\end{align}
Summing over $m$ and $\beta$ gives
\begin{align}
\lambda_\psi=\sum_a^{\textrm{rank}(M'')} \lambda_\phi^{a,a}.
\label{lambda_ni}
\end{align}
Next, multiply Eq.~\eqref{psi_ni} by $\bar{\phi}_{m\beta}^c\sum_{i\neq m,\alpha}\phi_{i\alpha}^a\bar{\psi}_{i\alpha}$, and sum over $m$ and $\beta$:
\begin{align}
\lambda_\phi^{a,c} = \sum_{m,\beta}\psi_{m\beta}\bar{\phi}_{m\beta}^c\sum_{i\neq m,\alpha}\phi_{i\alpha}^a\bar{\psi}_{i\alpha}.
\end{align}
Substituting back into Eq.~\eqref{psi_ni} yields
\begin{align}
\psi_{m\beta}&= \sum_b^{\textrm{rank}(M'')} \phi_{m\beta}^b\sum_{m\beta}\psi_{m\beta}\bar{\phi}_{m\beta}^b\nonumber\\
1 &= \sum_b^{\textrm{rank}(M'')} 1  =\textrm{rank}(M''),
\end{align}
where we divided by $\psi_{m\beta}\neq0$ and used the normalistaion relation $\sum_{m\beta}\phi_{m\beta}^b\bar{\phi}_{m\beta}^b=1$. $M''$ being a rank one projector collapses the sums in Eqs.~\eqref{prod_ni} and~\eqref{lambda_ni}, leaving
\begin{align}
|\psi_{m\beta}| = |\phi_{m\beta}^1|.
\end{align}
Substituting this condition into the objective function:

\begin{align}
&||\mathcal{E}_\mathbb{I}-\Gamma||_\diamond/2  =\nonumber\\
&\sum_{a=1}^{\textrm{rank}(A)} \sum_{i,\alpha}\sum_{j\neq i,\beta} |\bar{\phi}^a_{i\alpha}||\phi^a_{j\beta}||\psi_{i\alpha}||\bar{\psi}_{j\beta}|e^{i(\theta_{i\alpha}+\varphi_{i\alpha}-\theta_{j\beta}-\varphi_{j\beta})}\nonumber\\
&\leq\sum_{i,\alpha}\sum_{j\neq i,\beta}|\psi_{i\alpha}|^2|\bar{\psi}_{j\beta}|^2\nonumber\\
%& = \sum_{i}^d\sum_\alpha^d\sum_{\beta}^d\sum_{j\neq i}^d|\psi_{i\alpha}|^2|\bar{\psi}_{j\beta}|^2\\
& = \sum_{i}\sum_\alpha|\psi_{i\alpha}|^2\sum_{\beta}^d\sum_{j\neq i}|\bar{\psi}_{j\beta}|^2\nonumber\\
& = \sum_{i}\sum_\alpha|\psi_{i\alpha}|^2(1-\sum_{\beta}|\bar{\psi}_{i\beta}|^2)\nonumber\\
& = 1-\sum_{i,\alpha,\beta}|\psi_{i\alpha}|^2|\psi_{i\beta}|^2.
\end{align}
We now use the Lagrange multiplier method again, this time optimising over the magnitudes $|\psi_{i\alpha}|$. The Lagrangian is
\begin{align}
\mathcal{L}'=1-\sum_i \sum_\alpha |\psi_{i\alpha}|^2\sum_\beta|\psi_{i\beta}|^2 + \lambda(1-\sum_{i,\alpha}|\psi_{i\alpha}|^2).
\end{align}
Setting the derivative with respect to $|\psi_{k\gamma}|$ to zero yields:
\begin{align}
\lambda|\psi_{k\gamma}| &= \sum_\beta 2|\psi_{k\beta}|^2|\psi_{k\gamma}|\nonumber\\
\lambda &= \sum_\beta 2|\psi_{k\beta}|^2.
\end{align}
We divided by $|\psi_{k\gamma}|\neq0$. Now, summing over $k$ and using the normalisation condition $\sum_{k,\beta}|\psi_{k\beta}|^2=1$ we get 
\begin{align}
\lambda = \frac{2}{d}.
\end{align}
Noticing that the objective function has the form
\begin{align}
||\mathcal{E}_\mathbb{I}-\Gamma||_\diamond/2 &= 1-\sum_i (\lambda/2)^2\nonumber\\
& = 1-d\left(\frac{1}{d^2}\right) = 1-1/d.
\end{align}
\end{proof}
Thus the diamond-norm distance between the two channels is the same as the induced trace-norm distance.

\end{appendix}
%\bibstyle{h-physrev}
%\def\urlprefix{}
%\def\url#1{}
%\def\Eprint#1{}
%\def\Doi{}
%\def\Note{}
%\bibliography{/Users/georgeknee/Documents/paper_library/gck_full_bibliography}
\bibliography{gck_full_bibliography}

\end{document}